\documentclass[journal]{IEEEtran}
\usepackage{cite}
\usepackage{amsmath}
\usepackage{amssymb}
\usepackage{amsfonts}
\usepackage{cases}
\usepackage{graphicx}
\usepackage{color}
\usepackage{multirow}
\usepackage{booktabs}
\usepackage{relsize}
\usepackage{stfloats}
\usepackage{xcolor}
\usepackage{enumerate}
\usepackage{float}
\usepackage{setspace}
\usepackage{verbatim}
\usepackage{bm}
\usepackage{xfrac}
\usepackage{subcaption}
\usepackage{dsfont}
\usepackage{lipsum}
\usepackage{hyperref}
\usepackage{algorithm}
\usepackage{algpseudocode}
\usepackage{pgfplots}
\pgfplotsset{compat=1.18}
\usepgfplotslibrary{polar}
\usepackage{subcaption}

\newtheorem{remark}{Remark}

\newtheorem{proposition}{Proposition}

\newcommand{\rom}[1]{\uppercase\expandafter{\romannumeral #1\relax}}

\allowdisplaybreaks

\begin{document}

\title{Electromagnetic Modeling of Deformation-Based Pinching Antennas}

\author{Dimitrios~Tyrovolas,~\IEEEmembership{Member,~IEEE}, Nikos C. Sagias,~\IEEEmembership{Senior Member,~IEEE}, Ke Guan,~\IEEEmembership{Senior Member,~IEEE}, \\ George K.~Karagiannidis,~\IEEEmembership{Fellow,~IEEE}, and Stylianos D. Asimonis,~\IEEEmembership{Senior Member,~IEEE}% <-this % stops a space

\thanks{ 
D. Tyrovolas and N. C. Sagias are with the Department of Informatics and Telecommunications, University of Peloponnese, 22131 Tripoli, Greece (e-mail: d.tyrovolas@uop.gr, nsagias@uop.gr).

K. Guan is with the School of Electronics and Information Engineering, Beijing Jiaotong University, 100044, Beijing, China and Christian Doppler Laboratory for Digital Twin Assisted AI for Sustainable Radio Access Networks and Institute for Telecommunications, Technische Universität Wien, 1040 Vienna, Austria (e-mail: kguan@bjtu.edu.cn). 

G. K. Karagiannidis is with the Department of Electrical and Computer Engineering, Aristotle University of Thessaloniki, 54124 Thessaloniki, Greece (e-mail: geokarag@auth.gr).

S. D. Asimonis is with the Department of Electrical and Computer Engineering, University of Patras, 26504 Patras, Greece (e-mail: s.asimonis@upatras.gr)

}
}

\maketitle

%==============================================================
\begin{abstract}
Pinching-antenna systems (PAS) rely on local perturbations of a guiding structure to radiate part of the guided electromagnetic energy, making the physical response of the resulting pinching antenna (PA) fundamental to both guided-wave propagation and free-space radiation. To capture this coupled behavior, we develop an electromagnetic model of a finite deformation-based PA that directly relates the applied waveguide perturbation to its complex reflection and transmission responses and three-dimensional radiation characteristics. Full-wave MEEP simulations validate the model and demonstrate how the pinching ratio and pinched-section length jointly affect the transmitted power and phase, radiation power-fraction, and radiation pattern, thereby establishing PA geometry as a degree of freedom for controlling the radiation response. Importantly, this geometric dependence extends to communication performance, as the received signal-to-noise ratio (SNR) results reveal that neither minimizing the PA--user distance nor maximizing the radiation power-fraction necessarily yields the highest SNR, since the applied perturbation also determines the realized gain toward the user.
\end{abstract}

\begin{IEEEkeywords}
Pinching-antenna systems, electromagnetic modeling, dielectric waveguides, scattering parameters, reconfigurable radiation.
\end{IEEEkeywords}

%==============================================================
\section{Introduction}

In next-generation wireless networks, emerging applications such as immersive extended reality and the metaverse impose stringent requirements on reliability, latency, and data rate, while simultaneously demanding that the wireless infrastructure adapts to diverse user locations and propagation conditions \cite{6GMetaverse}. To meet these requirements, flexible-antenna architectures have been introduced as a means of dynamically modifying the antenna configuration and, consequently, the wireless channel \cite{survey2026}. In more detail, these architectures exploit spatially reconfigurable radiating elements whose locations can be adapted according to the communication objectives. However, realizing large-scale spatial reconfigurability through conventional antenna architectures introduces critical challenges in terms of feeding complexity, deployment cost, and propagation losses, particularly at high frequencies \cite{survey2026}. In such settings, transporting the signal through low-loss guiding structures before radiating it becomes particularly attractive, leading to the need for antenna architectures that can jointly exploit guided-wave propagation and dynamically configurable radiation points \cite{DOCOMO2022}. Therefore, enabling flexible radiation locations while preserving an efficient signal distribution mechanism becomes important for realizing spatially adaptive wireless networks.

Among the technologies envisioned to support such flexible antenna deployments, pinching-antenna systems (PAS) have emerged as a promising approach for providing reconfigurable wireless connectivity \cite{DOCOMO2022,Ding2025PAS,XuGeneralizedPAS2026}. Specifically, in PAS, the radio-frequency signal propagates along a dielectric waveguide and is radiated into free space through PAs that can be activated at different positions along it, thereby enabling the radiation points to be established close to the served users and facilitating favorable line-of-sight links \cite{Ding2025PAS,XuGeneralizedPAS2026}. At the same time, the selection of the PA locations directly affects both the free-space propagation distances and the phase shifts of the corresponding communication links, rendering the achievable performance inherently dependent on the spatial configuration of the activated PAs \cite{XuRatePAS2025,TyrovolasPerformancePAS}. However, the signal radiated by each PA is ultimately generated through its physical interaction with the guided electromagnetic field, which determines how the incident guided power is redistributed between the signal remaining in the waveguide and the field radiated into the surrounding environment \cite{Yuanwei2025Modeling, LiGeometryPAS2026}. As a result, the communication performance of PAS does not only depend on where the PAs are activated, but also on how their physical response modifies the guided and radiated signals. Consequently, understanding the extent to which the physical PA response affects the achievable communication performance becomes critical, motivating the need to systematically characterize PAS under physically consistent models.

\subsection{State-of-the-Art}

An increasing amount of literature has focused on the communication-level performance and design of PAS. In this direction, \cite{TegosPAS2025} investigated PA-position and resource-allocation optimization, while \cite{OikonomouOFDMA2025} characterized the frequency-selective behavior of multi-PA transmission and introduced an OFDMA-based framework. Furthermore, \cite{OuyangArrayGain2025} analyzed the achievable array gain and the impact of the number and spacing of the activated PAs, whereas \cite{TyrovolasPerformancePAS} characterized the outage probability, average rate, and optimal PA location. More recently, PAS deployments have been extended beyond conventional horizontal straight waveguides, with \cite{WangVerticalPAS2026} considering vertical PAS for UAV communications and \cite{QuCurvedPAS2026} investigating curved waveguide deployments. Therefore, these studies show that the achievable PAS performance is strongly affected by the PA configuration and waveguide deployment, motivating an accurate description of the physical mechanisms affecting PAS performance.

The design and performance implications of PAS have also been examined from a hardware and power-coupling perspective, providing further insight into the physical interaction between the guiding structure and the activated PAs. In more detail, \cite{Yuanwei2025Modeling} developed a physics-based hardware model in which the PA is represented as an open-ended directional coupler and coupled-mode theory is employed to characterize the corresponding power transfer, while \cite{XuPowerRadiation2026} proposed an adjustable power-radiation model in which the radiated power is controlled through the coupling spacing between the PA and the waveguide. More recently, \cite{WangMultiportPAS2026} established a physically consistent multiport-network representation for reconfigurable PAs, providing a network-level description of their guided and radiated signal interactions. In parallel, the radiation characteristics of PAs have been investigated through electromagnetic analysis, full-wave simulations, measurements, and measured-pattern-based system models. Specifically, \cite{LiGeometryPAS2026} demonstrated that the PA geometry substantially affects the radiation, \cite{ZubairRadiationPAS2026} developed a radiation-pattern model based on a two-dimensional dielectric slab, coupled-mode theory, and a two-dimensional radiation integral, and \cite{WeiMMPAS2026} introduced a polarization-aware full-wave electromagnetic model for multi-mode PAS. Finally, \cite{FengMeasuredPAS2026} incorporated an externally obtained measured radiation pattern together with coupling-dependent power extraction for PA placement and multi-PA design. Taken together, these studies have progressively established the importance of jointly accounting for PA power coupling, network response, and directional radiation when relating the physical PA behavior to communication-level performance.

\subsection{Motivation \& Contribution}

The practical realization of a PA requires a local perturbation of the guiding structure to couple part of the guided electromagnetic energy into free space \cite{Yuanwei2025Modeling,XuPowerRadiation2026}. In particular, depending on the applied perturbation, it can modify the local geometric characteristics of the waveguide and, consequently, affect both the guided-wave propagation and the corresponding radiation response \cite{LiGeometryPAS2026}. As a result, the PA geometry can jointly determine the radiated power, the amplitude and phase of the signal remaining in the waveguide, and the spatial radiation characteristics, thereby directly affecting the complex signal transmitted along the downstream waveguide and the resulting communication performance. However, existing studies have mainly characterized these effects through power-coupling or radiation-oriented formulations \cite{Yuanwei2025Modeling,XuPowerRadiation2026,LiGeometryPAS2026,ZubairRadiationPAS2026,WeiMMPAS2026}. Therefore, to the best of the authors' knowledge, no existing work has jointly related the same finite physical PA perturbation to its complete complex guided-wave scattering response and corresponding three-dimensional radiation characteristics within a unified electromagnetic framework.

In this paper, we develop an electromagnetic model of a finite deformation-based PA that directly relates the applied waveguide perturbation to its guided-wave and radiation responses. Specifically, starting from the same finite PA geometry, the proposed model jointly derives the complex reflection and transmission coefficients and the corresponding
three-dimensional radiation characteristics, thereby
determining both the complex signal remaining in the
waveguide and the field radiated into free space. Furthermore, validation against full-wave electromagnetic simulations performed using MEEP shows that the pinching ratio and pinched-section length jointly affect the transmitted power and phase, radiation power-fraction, and radiation pattern, thus establishing these geometric parameters as design degrees of freedom for shaping the PA radiation response. Finally, the proposed model is evaluated in terms of the received SNR, showing that minimizing the PA--user distance or maximizing the total radiation power-fraction does not necessarily yield the highest received SNR, since the applied perturbation also determines the realized gain and, consequently, the effectiveness of a given PA position. Therefore, the proposed electromagnetic model directly connects the finite deformation-based PA geometry to its complex downstream guided-wave response and three-dimensional radiation characteristics, providing the physical quantities required for geometry-aware PAS design.

The remainder of the paper is organized as follows. Section~II develops the guided-wave propagation and PA scattering model, while Section~III derives the corresponding far-field radiation characteristics and realized gain. Section~IV presents the numerical results, full-wave validation, and received-SNR evaluation, whereas Section~V concludes the paper.

%==============================================================

\section{Waveguide Propagation and Pinching-Antenna Response}
\label{sec:pa-characterization}

In this section, the propagation of the guided signal along the dielectric waveguide and its modification by an activated PA are characterized. The fundamental HE$_{11}$ mode is first described in the unperturbed and pinched waveguide sections, yielding the radius-dependent propagation constants and normalized modal fields, which are subsequently used to characterize the finite PA response arising from the two radius transitions and the propagation between them.

\subsection{Propagation in a Cylindrical Dielectric Waveguide with Pinching Antennas}
\label{sec:modes}

In a PAS, the guided signal propagates along a dielectric waveguide and is locally modified by the activated PAs. To isolate and quantify the electromagnetic effect of the mechanical perturbation introduced by pinching, an axisymmetric reduced-radius representation is adopted, in which the $k$-th PA locally changes the waveguide radius from $a$ to $b_k=\kappa_k a$ over a finite length $L_{{\rm P},k}$. This representation captures the transverse magnitude and longitudinal extent of the considered pinching perturbation through $\kappa_k$ and $L_{{\rm P},k}$, respectively, while retaining the two radius transitions and the propagation between them that jointly determine the resulting guided-wave scattering and radiation responses~\cite{Eleftheriades1994,Svendsen2013}. The resulting structure therefore consists of a transition from $a$ to $b_k$, a uniform pinched section, and the reverse transition from $b_k$ to $a$, which are modeled as piecewise-uniform cylindrical dielectric waveguide sections following the modal-scattering representation of waveguide discontinuities.

Under the considered single-mode regime, both the unperturbed and
pinched sections support the fundamental hybrid HE$_{11}$
mode~\cite{Snitzer1961}. Accordingly, for a cylindrical dielectric
waveguide section of radius $r\in\{a,b_k\}$, the dependence of this mode
on the local waveguide geometry is introduced through the normalized
frequency
\begin{equation}
\small
V(r)
=
k_0r\sqrt{n_1^2-n_2^2}
=
k_0r\sqrt{\epsilon_r-1},
\label{eq:V}
\end{equation}
where $k_0=2\pi/\lambda_0$ is the free-space wavenumber, while
$n_1=\sqrt{\epsilon_r}$ and $n_2=1$ denote the refractive indices of
the dielectric core and the surrounding air, respectively. For a given
$V(r)$, the transverse behavior of the HE$_{11}$ mode is characterized
by the normalized transverse wavenumbers
$u(r)=r\sqrt{k_0^2\epsilon_r-\beta^2(r)}$ inside the dielectric core
and $w(r)=r\sqrt{\beta^2(r)-k_0^2}$ in the surrounding air.
Consequently, it holds that $u^2(r)+w^2(r)=V^2(r)$ and, therefore,
$w(r)=\sqrt{V^2(r)-u^2(r)}$, while the corresponding propagation
constant follows as
\begin{equation}
\small
\beta(r)
=
\sqrt{
k_0^2+\frac{w^2(r)}{r^2}
}.
\label{eq:beta-from-w}
\end{equation}
Thus, for a specified radius $r$, both $w(r)$ and $\beta(r)$ are
determined once $u(r)$ is known, reducing the modal characterization of
each uniform waveguide section to a single unknown. In particular, this
unknown is determined by enforcing the dielectric-air boundary
conditions at $\rho=r$ for the HE$_{11}$ mode, where continuity of the
tangential electric and magnetic fields yields~\cite{Snitzer1961}
\begin{equation}
\small
\begin{split}
&
\Bigg[
\frac{J'_1\!\left(u(r)\right)}
{u(r)J_1\!\left(u(r)\right)}
+
\frac{K'_1\!\left(w(r)\right)}
{w(r)K_1\!\left(w(r)\right)}
\Bigg]
\Bigg[
n_1^2
\frac{J'_1\!\left(u(r)\right)}
{u(r)J_1\!\left(u(r)\right)}
\\
&\qquad
+
n_2^2
\frac{K'_1\!\left(w(r)\right)}
{w(r)K_1\!\left(w(r)\right)}
\Bigg]
=
\left(
\frac{\beta(r)}{k_0}
\right)^2
\left(
\frac{1}{u^2(r)}
+
\frac{1}{w^2(r)}
\right)^2,
\end{split}
\label{eq:he-general}
\end{equation}
where $J_1(\cdot)$ is the Bessel function of the first kind,
$K_1(\cdot)$ is the modified Bessel function of the second kind, and
$(\cdot)'$ denotes differentiation with respect to the argument.
Consequently, substituting $w(r)=\sqrt{V^2(r)-u^2(r)}$ and \eqref{eq:beta-from-w} into \eqref{eq:he-general} reduces the boundary-condition relation to a one-dimensional nonlinear equation for $u(r)\in(0,V(r))$, from which the root corresponding to the fundamental HE$_{11}$ branch is retained to determine $w(r)$ and $\beta(r)$. Finally, evaluating this
modal solution for $r=a$ and $r=b_k$ yields the propagation constants
$\beta(a)$ and $\beta(b_k)$ of the unperturbed waveguide and the
pinched section, respectively, thereby providing the guided phase
evolution along the common waveguide and through the finite pinched
section, together with the modal quantities required to characterize
the scattering generated at the two radius transitions.

At each radius transition, the local guided response is determined by
the HE$_{11}$ field distributions supported by the adjoining
unperturbed and pinched sections. In particular, the change in radius
between $a$ and $b_k$ modifies the transverse confinement of the guided
mode and therefore produces different field distributions on the two
sides of the transition. Accordingly, for the modal characterization, a
local cylindrical coordinate system $(\rho,\phi,z)$ is adopted with the
waveguide axis along $z$, while $\rho$ and $\phi$ denote the radial and
azimuthal coordinates in the transverse plane, respectively. For a
uniform cylindrical section of radius $r\in\{a,b_k\}$, the longitudinal
electric and magnetic components of the forward-propagating HE$_{11}$
mode are expressed as~\cite{Snitzer1961}
{\small
\begin{align}
E_z(\rho,\phi,z;r)
&=
C(r)
\Psi(\rho;r)
\cos\phi\,
e^{-j\beta(r)z},
\label{eq:Ez-he11}
\\
H_z(\rho,\phi,z;r)
&=
D(r)
\Psi(\rho;r)
\sin\phi\,
e^{-j\beta(r)z},
\label{eq:Hz-he11}
\end{align}
}
where the corresponding radial field dependence is
\begin{equation}
\small
\Psi(\rho;r)
=
\begin{cases}
J_1\!\left(
u(r)\rho/r
\right),
&
0\leq\rho<r,
\\[2.2ex]
\dfrac{
J_1\!\left(u(r)\right)
}{
K_1\!\left(w(r)\right)
}
K_1\!\left(
w(r)\rho/r
\right),
&
\rho\geq r.
\end{cases}
\label{eq:psi-he11}
\end{equation}
Since the HE$_{11}$ mode is hybrid, both longitudinal field components
are required to satisfy the dielectric-air boundary conditions, and their
amplitudes are related according to~\cite{Snitzer1961}
\begin{equation}
\small
\frac{D(r)}{C(r)}
=
-
\frac{\beta(r)}{\omega\mu_0}
\frac{
\displaystyle
\frac{1}{u^2(r)}
+
\frac{1}{w^2(r)}
}{
\displaystyle
\frac{
J_1'\!\left(u(r)\right)
}{
u(r)J_1\!\left(u(r)\right)
}
+
\frac{
K_1'\!\left(w(r)\right)
}{
w(r)K_1\!\left(w(r)\right)
}
},
\label{eq:HE11-longitudinal-ratio}
\end{equation}
where $\omega=2\pi f_0$ and $\mu_0$ denotes the free-space permeability. Accordingly, once the longitudinal fields in \eqref{eq:Ez-he11} and
\eqref{eq:Hz-he11} are specified, the corresponding transverse fields
follow directly from Maxwell's equations. In particular, the transverse
electric field is given by~\cite{Snitzer1961}
\begin{equation}
\small
\begin{split}
&\mathbf{E}_{t}(\rho,\phi,z;r)
=
-\frac{j e^{-j\beta(r)z}}
{k_t^2(\rho;r)}
\Bigg\{
\widehat{\boldsymbol{\rho}}\cos\phi
\left[
\beta(r)C(r)
\frac{{\rm d}\Psi(\rho;r)}{{\rm d}\rho}
\right.
\\
&\left.
+
\omega\mu_0D(r)
\frac{\Psi(\rho;r)}{\rho}
\right]
-
\widehat{\boldsymbol{\phi}}\sin\phi
\left[
\beta(r)C(r)
\frac{\Psi(\rho;r)}{\rho}
\right.
\\
&\left.
+
\omega\mu_0D(r)
\frac{{\rm d}\Psi(\rho;r)}{{\rm d}\rho}
\right]
\Bigg\},
\end{split}
\label{eq:Et-he11}
\end{equation}
while the transverse magnetic field follows as
\begin{equation}
\small
\begin{split}
&\mathbf{H}_{t}(\rho,\phi,z;r)
=
-\frac{j e^{-j\beta(r)z}}
{k_t^2(\rho;r)}
\Bigg\{
\widehat{\boldsymbol{\rho}}\sin\phi
\left[
\beta(r)D(r)
\frac{{\rm d}\Psi(\rho;r)}{{\rm d}\rho}
\right.
\\
&\left.
+
\omega\varepsilon(\rho;r)C(r)
\frac{\Psi(\rho;r)}{\rho}
\right]
+
\widehat{\boldsymbol{\phi}}\cos\phi
\left[
\beta(r)D(r)
\frac{\Psi(\rho;r)}{\rho}
\right.
\\
&\left.
+
\omega\varepsilon(\rho;r)C(r)
\frac{{\rm d}\Psi(\rho;r)}{{\rm d}\rho}
\right]
\Bigg\}.
\end{split}
\label{eq:Ht-he11}
\end{equation}
Moreover, the radial derivative required in
\eqref{eq:Et-he11} and \eqref{eq:Ht-he11} follows directly from
\eqref{eq:psi-he11} as
\begin{equation}
\small
\frac{{\rm d}\Psi(\rho;r)}{{\rm d}\rho}
=
\begin{cases}
\frac{u(r)}{r}
J_0\!\left(
\frac{u(r)}{r}\rho
\right)
-
\frac{1}{\rho}
J_1\!\left(
\frac{u(r)}{r}\rho
\right),
&
0\leq\rho<r,
\\[1mm]
\begin{aligned}
&-
\frac{
J_1\!\left(u(r)\right)
}{
K_1\!\left(w(r)\right)
}
\Bigg[
\frac{w(r)}{r}
K_0\!\left(
\frac{w(r)}{r}\rho
\right)
\\
&\hspace{18mm}
+
\frac{1}{\rho}
K_1\!\left(
\frac{w(r)}{r}\rho
\right)
\Bigg]
\end{aligned},
&
\rho\geq r,
\end{cases}
\label{eq:psi-he11-derivative}
\end{equation}
while $\varepsilon(\rho;r)$ is equal to
\begin{equation}
\small
\varepsilon(\rho;r)
=
\begin{cases}
\epsilon_0\epsilon_r,
&
0\leq\rho<r,
\\
\epsilon_0,
&
\rho\geq r,
\end{cases}
\label{eq:section-permittivity}
\end{equation}
which determines the corresponding transverse wavenumber as
\begin{equation}
\small
\begin{split}
k_t^2(\rho;r)
&=
\omega^2\mu_0\varepsilon(\rho;r)
-
\beta^2(r)
=
\begin{cases}
u^2(r)/r^2,
&
0\leq\rho<r,
\\[1ex]
-w^2(r)/r^2,
&
\rho\geq r.
\end{cases}
\end{split}
\label{eq:section-transverse-wavenumber}
\end{equation}
Hence, evaluating
\eqref{eq:Ez-he11}--\eqref{eq:section-transverse-wavenumber}
for $r=a$ and $r=b_k$ provides the HE$_{11}$ modal fields supported by
the unperturbed and pinched sections, respectively, while the corresponding backward mode can be defined by
$\beta^{-}(r)=-\beta^{+}(r)$, $C^{-}(r)=-C^{+}(r)$, and
$D^{-}(r)=D^{+}(r)$, consistently with the usual forward--backward
waveguide-mode convention~\cite{Svendsen2013}. Hence, at the reference
plane, the transverse fields satisfy
$\mathbf E_t^{-}=\mathbf E_t^{+}$ and
$\mathbf H_t^{-}=-\mathbf H_t^{+}$, while $u(r)$, $w(r)$, and
$\Psi(\rho;r)$ remain unchanged since they depend on $\beta^2(r)$.

With the radius-dependent HE$_{11}$ field distributions established,
their amplitudes must be referred to the same guided-power level before
the interaction at a radius transition can be characterized. Although
\eqref{eq:HE11-longitudinal-ratio} fixes the ratio between $C(r)$ and
$D(r)$, their common modal scaling remains arbitrary, so equal field
coefficients for $r=a$ and $r=b_k$ do not generally correspond to equal
guided powers. Therefore, the magnitude of the power carried by the
HE$_{11}$ mode in a section of radius $r$ is obtained from the
longitudinal component of the time-average Poynting vector
as~\cite{Snitzer1961,Balanis2016}
\begin{equation}
\small
P_{\rm g}(r)=\frac{1}{2}\left|
\Re\!\left\{
\int_{0}^{2\pi}\!\!\!\!\int_{0}^{\infty}\!\!\!
\left[\mathbf E_t(\rho,\phi,z;r)\!\times\!\mathbf H_t^{*}(\rho,\phi,z;r)\right]
\!\cdot\!\widehat{\mathbf z}\,\rho\,{\rm d}\rho\,{\rm d}\phi\!
\right\}\!\right|\!.
\label{eq:guided-modal-power}
\end{equation}
Consequently, the corresponding unit-power modal fields are
\begin{align}
\small
\overline{\mathbf E}(\rho,\phi,z;r)
&=
\frac{
\mathbf E(\rho,\phi,z;r)
}{
\sqrt{P_{\rm g}(r)}
},
\label{eq:power-normalized-electric-field}
% \end{equation}
% and
% \begin{equation}
\\
\small
\overline{\mathbf H}(\rho,\phi,z;r)
&=
\frac{
\mathbf H(\rho,\phi,z;r)
}{
\sqrt{P_{\rm g}(r)}
},
\label{eq:power-normalized-magnetic-field}
\end{align}
where $\mathbf E=\widehat{\mathbf z}E_z+\mathbf E_t$ and
$\mathbf H=\widehat{\mathbf z}H_z+\mathbf H_t$, while
$\overline{\mathbf E}_t$ and $\overline{\mathbf H}_t$ denote the
corresponding transverse components of the normalized fields. Finally, since the same normalization is applied to the forward and backward HE$_{11}$ modes, it also holds that
$\overline{\mathbf E}_t^{-}=\overline{\mathbf E}_t^{+}$ and
$\overline{\mathbf H}_t^{-}=-\overline{\mathbf H}_t^{+}$ at the
reference plane, hence, placing the HE$_{11}$ fields of the unperturbed and pinched sections on the same unit-guided-power basis.

\subsection{Pinching Antenna Response}
\label{sec:mm}
In a PAS, activating a PA locally modifies the
waveguide geometry and redistributes the incident guided power among the signal transmitted to the downstream waveguide section, the signal reflected toward the source, and radiation. Hence, the PA cannot be characterized
through an independently assigned radiation coefficient, but through the
scattering response induced by its finite geometry. In particular, the
$k$-th PA consists of a first transition from the unperturbed radius
$a$ to the pinched radius $b_k$, a uniform section of radius $b_k$ and
length $L_{{\rm P},k}$, and a second transition from $b_k$ back to $a$.
Within the considered single-mode operating regime, the pinched section
supports the fundamental HE$_{11}$ mode, with propagation constant
$\beta(b_k)$ obtained from \eqref{eq:beta-from-w}. Therefore, the PA
response is determined by the field interactions at the two radius
transitions together with the propagation through the finite section
between them.

At each radius transition, the change in waveguide radius modifies the
transverse field distribution of the supported HE$_{11}$ mode.
Consequently, the local guided interaction between the adjoining
waveguide sections is characterized by projecting the interface fields
onto their supported modes. Following the complex-power mode-matching
formulation for waveguide step discontinuities
\cite{Eleftheriades1994}, and retaining the HE$_{11}$ contribution in
the two adjoining sections, the corresponding electric--magnetic
cross-overlaps are defined as
{\small
\begin{align}
W_k
&=
\frac{1}{2}
\int_{0}^{2\pi}
\int_{0}^{\infty}
\left[
\overline{\mathbf E}_{t}(\rho,\phi,0;a)
\times
\overline{\mathbf H}_{t}^{*}(\rho,\phi,0;b_k)
\right]
\cdot
\widehat{\mathbf z}\,
\rho\,{\rm d}\rho\,{\rm d}\phi ,
\label{eq:W-step}
\\
Q_k
&=
\frac{1}{2}
\int_{0}^{2\pi}
\int_{0}^{\infty}
\left[
\overline{\mathbf E}_{t}^{*}(\rho,\phi,0;b_k)
\times
\overline{\mathbf H}_{t}(\rho,\phi,0;a)
\right]
\cdot
\widehat{\mathbf z}\,
\rho\,{\rm d}\rho\,{\rm d}\phi ,
\label{eq:Q-step}
\end{align}
}where $W_k$ and $Q_k$ measure the HE$_{11}$ cross-overlaps obtained by
interchanging the electric and magnetic fields of the unperturbed and
pinched sections. Since the modal fields are normalized to unit guided
power, the corresponding HE$_{11}$ self-overlaps are equal to one. For
unit HE$_{11}$ incidence from the unperturbed section, let $a_{r,k}$
and $a_{t,k}$ denote the local reflected and transmitted HE$_{11}$
amplitudes, respectively. Thus, by projecting the tangential-field continuity conditions onto the normalized HE$_{11}$ mode of the pinched section \cite{Svendsen2013}, and using the forward--backward field convention, gives $\left(1+a_{r,k}\right)W_k=a_{t,k}$ and
$\left(1-a_{r,k}\right)Q_k=a_{t,k}$, from which we can express $a_{r,k}$ and $a_{t,k}$ as
{\small
\begin{align}
a_{r,k}
&=
\frac{Q_k-W_k}{Q_k+W_k},
\label{eq:ar-step}
\\
a_{t,k}
&=
\frac{2W_kQ_k}{Q_k+W_k}.
\label{eq:at-step}
\end{align}
}Therefore, $a_{t,k}$ and $a_{r,k}$ characterize the transmitted and reflected HE$_{11}$ components of the local step response. Finally, with the HE$_{11}$ modal phases chosen consistently at the transition reference plane, reversing the transition from $a$ to $b_k$ into the transition from $b_k$ to $a$ interchanges the two cross-overlap relations. Consequently, under the same modal projection, the two radius transitions have the same local transmission amplitude $a_{t,k}$ and opposite reflection amplitudes $a_{r,k}$ and $-a_{r,k}$, respectively \cite{Svendsen2013}.

The local step response obtained above characterizes the guided
HE$_{11}$ field transmitted and reflected at each radius transition.
However, a dielectric-waveguide discontinuity also excites radiative
and evanescent fields~\cite{Bienstman2001,Khalil2003}. Since the two
radius transitions are separated by the finite distance
$L_{{\rm P},k}$, these additional fields can propagate or decay from
one transition to the other and couple back into the guided HE$_{11}$
field, thereby introducing a length-dependent interaction between the
two transitions~\cite{Gerard1994}. To characterize this interaction,
the additional fields in the uniform pinched section of radius $b_k$
are represented through its continuous cylindrical-mode spectrum.
Since the radius change is independent of the azimuthal angle $\phi$,
the $m=1$ dependence of the incident HE$_{11}$ field is preserved, so
only the corresponding $m=1$ cylindrical modes are required. Thus,
denoting by $q\geq0$ the transverse wavenumber in the surrounding air,
the relation $q^2+\beta_q^2=k_0^2$ gives
\begin{equation}
\small
\beta_q
=
\begin{cases}
\sqrt{k_0^2-q^2},
&
0\leq q<k_0,
\\[1mm]
-j\sqrt{q^2-k_0^2},
&
q>k_0,
\end{cases}
\label{eq:beta-q}
\end{equation}
where $q<k_0$ and $q>k_0$ correspond to longitudinally propagating and
decaying components, respectively, while the corresponding
dimensionless transverse wavenumber inside the pinched section is
$u_q=b_k\sqrt{k_0^2\epsilon_r-\beta_q^2}$. Following the
cylindrical radiation-mode representation in~\cite{LiBrown2006}, for
each $q$ the $m=1$ field is described by two independent cylindrical
solutions, indexed by $s\in\{1,2\}$, whose longitudinal electric and
magnetic fields within the uniform pinched section are{\small
\begin{align}
E_{z,q,s}(\rho,\phi,z;b_k)
&=
\Psi_{E,q,s}(\rho;b_k)
\cos\phi\,e^{-j\beta_qz},
\label{eq:Ez-q}
\\
H_{z,q,s}(\rho,\phi,z;b_k)
&=
\Psi_{H,q,s}(\rho;b_k)
\sin\phi\,e^{-j\beta_qz},
\label{eq:Hz-q}
\end{align}
}where
\begin{equation}
\small
\Psi_{E,q,s}(\rho;b_k)
=
\begin{cases}
C_{q,s}J_1\!\left(u_q\rho/b_k\right),
&
0\leq\rho<b_k,
\\[1mm]
A_{q,s}J_1(q\rho)
+
B_{q,s}Y_1(q\rho),
&
\rho\geq b_k,
\end{cases}
\label{eq:Psi-E-q}
\end{equation}
and
\begin{equation}
\small
\Psi_{H,q,s}(\rho;b_k)
=
\begin{cases}
D_{q,s}J_1\!\left(u_q\rho/b_k\right),
&
0\leq\rho<b_k,
\\[1mm]
F_{q,s}J_1(q\rho)
+
G_{q,s}Y_1(q\rho),
&
\rho\geq b_k,
\end{cases}
\label{eq:Psi-H-q}
\end{equation}
where $Y_1(\cdot)$ denotes the Bessel function of the second kind,
while the coefficients $C_{q,s}$, $D_{q,s}$, $A_{q,s}$, $B_{q,s}$,
$F_{q,s}$, and $G_{q,s}$ satisfy continuity of $E_z$, $H_z$,
$E_\phi$, and $H_\phi$ at $\rho=b_k$~\cite{LiBrown2006}. These
longitudinal fields then determine the transverse fields directly from
Maxwell's equations, consistently with \eqref{eq:Et-he11} and
\eqref{eq:Ht-he11}, as{\small
\begin{align}
\mathbf E_{t,q,s}(\rho,\phi,z;b_k)
&=
-\frac{j\left[
\beta_q\nabla_tE_{z,q,s}
-
\omega\mu_0
\widehat{\mathbf z}\times
\nabla_tH_{z,q,s}
\right]}{k_{t,q}^{2}(\rho)},
\label{eq:Et-q}
\\
\mathbf H_{t,q,s}(\rho,\phi,z;b_k)
&=
-\frac{j\left[
\beta_q\nabla_tH_{z,q,s}
+
\omega\varepsilon(\rho;b_k)
\widehat{\mathbf z}\times
\nabla_tE_{z,q,s}
\right]}{k_{t,q}^{2}(\rho)},
\label{eq:Ht-q}
\end{align}
}where
$\nabla_t=
\widehat{\boldsymbol\rho}\frac{\partial}{\partial\rho}
+
\widehat{\boldsymbol\phi}\frac{1}{\rho}
\frac{\partial}{\partial\phi}$,
and
\begin{equation}
\small
k_{t,q}^{2}(\rho)
=
\begin{cases}
u_q^2/b_k^2,
&
0\leq\rho<b_k,
\\[1mm]
q^2,
&
\rho\geq b_k.
\end{cases}
\label{eq:kt-q}
\end{equation}
To use these fields to determine how strongly each cylindrical
component is excited at a radius transition, their arbitrary modal scaling must first be fixed. Accordingly, the cylindrical solutions are scaled using the radiation-mode normalization in \cite{LiBrown2006}, while both the propagating and decaying components of the continuous spectrum are retained consistently with the radiation-spectrum treatment in \cite{Khalil2003}, and the resulting transverse fields are denoted by $\overline{\mathbf E}_{t,q,s}$ and $\overline{\mathbf H}_{t,q,s}$. Following the reciprocity relations for waveguide modes in \cite{Svendsen2013} and the radiation-spectrum treatment of dielectric-waveguide discontinuities in \cite{Khalil2003}, their coupling to the guided HE$_{11}$ field is obtained through the corresponding modal projection. Thus, the corresponding cross-overlaps are defined as{\small
\begin{align}
W_{k,s}(q)
&=\!
\!\frac{1}{2}\!
\int_{0}^{2\pi}\!\!\!\!
\int_{0}^{\infty}\!\!
\Big[
\overline{\mathbf E}_{t}^{+}(\rho,\phi,0;a) \!\times\!
\overline{\mathbf H}_{t,q,s}(\rho,\phi,0;b_k)
\!\Big]\!\!
\cdot\!\widehat{\mathbf z}
\rho{\rm d}\rho{\rm d}\phi ,
\label{eq:W-continuous}
\\
Q_{k,s}(q)
&=\!
\frac{1}{2}\!
\int_{0}^{2\pi}\!\!\!\!
\int_{0}^{\infty}\!\!
\Big[
\overline{\mathbf E}_{t,q,s}(\rho,\phi,0;b_k) \! \times \!
\overline{\mathbf H}_{t}^{+}(\rho,\phi,0;a)
\!\Big]\!\!
\cdot\!\widehat{\mathbf z}
\rho{\rm d}\rho{\rm d}\phi .
\label{eq:Q-continuous}
\end{align}
}The common factor $-j$ in (25) and (26) is absorbed into the
modal scaling used in (28), (29), and the radiation-mode
normalization of [27], making the transverse modal profiles
real at $z=0$ for propagating components. This phase convention,
retained throughout (30)--(34), ensures consistency between
the non-conjugated reciprocity-based projections in (28) and
(29) and the conjugated HE$_{11}$ projections in (16)
and (17). Moreover, since the HE$_{11}$ mode and the cylindrical
modes of the same uniform radius-$b_k$ section are mutually
orthogonal under the adopted modal projection \cite{Svendsen2013},
their corresponding cross-projections vanish. Consequently,
projecting the electric- and magnetic-field continuity relations
onto the cylindrical modes under the same forward--backward
convention yields the corresponding sum and difference of
spectral amplitudes~\cite{Khalil2003,Svendsen2013}. Hence,
for unit HE$_{11}$ incidence from the unperturbed section,
the cylindrical component excited at the first transition is
\begin{equation}
\small
c_{k,s}^{(1)}(q)
=
\frac{1}{2}
\left[
\left(1+a_{r,k}\right)W_{k,s}(q)
+
\left(1-a_{r,k}\right)Q_{k,s}(q)
\right].
\label{eq:c-first}
\end{equation}
Similarly, for unit HE$_{11}$ incidence from within the pinched
section, reversing the radius transition and propagation direction gives a transmitted HE$_{11}$ field in the unperturbed section with amplitude $a_{t,k}$ and a reflected HE$_{11}$ field in the pinched section with amplitude $-a_{r,k}$. Since the incident and reflected HE$_{11}$ fields supported by the $b_k$-radius section have vanishing cross-projections with the cylindrical modes, the corresponding
projection retains the transmitted contribution. Thus, using the same forward--backward convention gives
\begin{equation}
c_{k,s}^{(2)}(q)
=
\frac{a_{t,k}}{2}
\left[
W_{k,s}(q)
-
Q_{k,s}(q)
\right].
\label{eq:c-second}
\end{equation}
Thus, $c_{k,s}^{(1)}(q)$ and $c_{k,s}^{(2)}(q)$ quantify the coupling of the $(q,s)$-th cylindrical component associated with HE$_{11}$ incidence from the unperturbed and pinched sections, respectively, while reciprocity gives the same coefficients for the corresponding conversion back into the guided fields~\cite{Svendsen2013}. Once
excited, each cylindrical component propagates or decays between the
two radius transitions according to
$e^{-j\beta_qL_{{\rm P},k}}$ and can therefore couple back into the
HE$_{11}$ field at the opposite transition, consistently with the
radiation-spectrum treatment of successive dielectric-waveguide
discontinuities in~\cite{Gerard1994}. Consequently, combining the coupling at one transition, propagation through the finite pinched section, and coupling at the other transition over the cylindrical spectrum gives
\begin{align}
\small
\mu_k
&=
\sum_{s=1}^{2}
\int_{0}^{\infty}
\left[c_{k,s}^{(1)}(q)\right]^2
e^{-j\beta_qL_{{\rm P},k}}
\,{\rm d}q,
\label{eq:mu-k}
% \end{equation}
\\
% \begin{equation}
\small
\nu_k
&=
\sum_{s=1}^{2}
\int_{0}^{\infty}
c_{k,s}^{(1)}(q)
c_{k,s}^{(2)}(q)
e^{-j\beta_qL_{{\rm P},k}}
\,{\rm d}q,
\label{eq:nu-k}
% \end{equation}
% and
\\
% \begin{equation}
\small
\gamma_k
&=
\sum_{s=1}^{2}
\int_{0}^{\infty}
\left[c_{k,s}^{(2)}(q)\right]^2
e^{-j\beta_qL_{{\rm P},k}}
\,{\rm d}q,
\label{eq:gamma-k}
\end{align}
where $\mu_k$ accounts for the coupling between the HE$_{11}$ states
supported by the two unperturbed $a$-radius sections, $\nu_k$ accounts
for the coupling between an HE$_{11}$ state supported by an
unperturbed $a$-radius section and an HE$_{11}$ state supported by the
pinched $b_k$-radius section, while $\gamma_k$ accounts for the
coupling between the HE$_{11}$ states supported by the pinched
$b_k$-radius section at the two radius transitions. Hence, the local coefficients $a_{t,k}$ and $a_{r,k}$ characterize each individual radius transition, whereas $\mu_k$, $\nu_k$, and $\gamma_k$ characterize the coupling between the HE$_{11}$ states at the two transitions through the cylindrical spectrum and its dependence on $L_{{\rm P},k}$. The resulting guided-state interaction is then accounted for self-consistently through the internal HE$_{11}$ amplitudes, thereby determining the complex signal reflected toward the source and transmitted through the PA toward the downstream waveguide section, as stated in the following proposition.
\begin{proposition}
For the $k$-th PA characterized by the pinching ratio
$\kappa_k=b_k/a$ and the pinched-section length $L_{{\rm P},k}$,
the complex reflection and transmission coefficients are given by

{\small
\begin{align}
S_{11,k}^{\rm PA}
&=
a_{r,k}
+
\frac{
p_k
\left[
-a_{r,k}p_k
\left(
a_{t,k}^2+\nu_k^2
\right)
+
2\left(1-p_k\gamma_k\right)a_{t,k}\nu_k
\right]
}{
\Delta_k
},
\label{eq:S11-PA}
\\
S_{21,k}^{\rm PA}
&=
\mu_k
+
\frac{
p_k
\left[
\left(1-p_k\gamma_k\right)
\left(
a_{t,k}^2+\nu_k^2
\right)
-
2a_{r,k}p_ka_{t,k}\nu_k
\right]
}{
\Delta_k
},
\label{eq:S21-PA}
\end{align}
}respectively, where $\Delta_k
=
\left(1-p_k\gamma_k\right)^2
-
a_{r,k}^2p_k^2$, and
$p_k=e^{-j\beta(b_k)L_{{\rm P},k}}$
denotes the direct HE$_{11}$ propagation factor through the pinched
section. Consequently, the corresponding fractions of the incident
guided power reflected toward the source and transmitted toward the
downstream waveguide section are
$\left|S_{11,k}^{\rm PA}\right|^2$ and
$\left|S_{21,k}^{\rm PA}\right|^2$, respectively.
\label{prop:PA-Sparameters}
\end{proposition}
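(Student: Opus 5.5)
The plan is to treat the finite PA as two scatterers coupled through the pinched section. I would write a self-consistent set of equations for the internal HE$_{11}$ amplitudes and then read off $S_{11,k}^{\rm PA}$ and $S_{21,k}^{\rm PA}$ as the direct path plus the contributions re-radiated into the unperturbed sections.

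\textbf{Internal amplitudes.} For unit HE$_{11}$ incidence from the left unperturbed section, I introduce two unknowns:
\begin{itemize}
\item $x_k$, the forward HE$_{11}$ amplitude in the pinched section just after the first transition;
\item $y_k$, the backward HE$_{11}$ amplitude in the pinched section just before the second transition, referred to its own plane.
\end{itemize}
Each internal state is fed by four mechanisms:
\begin{itemize}
\item direct transmission of the incident wave at the first transition, with amplitude $a_{t,k}$ from \eqref{eq:at-step};
\item excitation of the pinched state at the opposite transition through the cylindrical spectrum, with coupling $\nu_k$ from \eqref{eq:nu-k}, which uses $c^{(1)}_{k,s}$ at one end and $c^{(2)}_{k,s}$ at the other;
\item HE$_{11}$ reflection of the counter-propagating internal state after traversing the section, giving the term $\mp a_{r,k}p_k$;
\item spectral recoupling between pinched states at the two transitions, giving the term $p_k\gamma_k$ with $\gamma_k$ from \eqref{eq:gamma-k}. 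The factor $p_k$ appears because $\gamma_k$ only links states whose guided part has traversed the section.
\end{itemize}
This yields the linear system
\begin{align}
(1-p_k\gamma_k)x_k + a_{r,k}p_k y_k &= a_{t,k},\\
a_{r,k}p_k x_k + (1-p_k\gamma_k) y_k &= \nu_k .
\end{align}
The signs of the $a_{r,k}$ terms follow from the statement that the two transitions have opposite reflection amplitudes $a_{r,k}$ and $-a_{r,k}$. The source $\nu_k$ is the spectral component launched at the first transition that recouples into the pinched HE$_{11}$ state at the second transition.

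\textbf{Solving and reading off the S-parameters.} The system has determinant $\Delta_k$ and gives
\begin{align}
x_k &= \big[(1-p_k\gamma_k)a_{t,k}-a_{r,k}p_k\nu_k\big]/\Delta_k,\\
y_k &= \big[(1-p_k\gamma_k)\nu_k-a_{r,k}p_k a_{t,k}\big]/\Delta_k .
\end{align}
\begin{itemize}
\item \emph{Transmission.} The downstream field is the spectral bypass $\mu_k$ between the two unperturbed sections, plus $p_k(a_{t,k}x_k+\nu_k y_k)$. The latter collects the direct HE$_{11}$ transmission at the second transition and the reciprocal spectral conversion, with the same coefficients by reciprocity. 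Expanding gives exactly \eqref{eq:S21-PA}.
\item \emph{Reflection.} The backward field in the source section is the local reflection $a_{r,k}$ plus the mirrored contributions $p_k(a_{t,k}y_k+\nu_k x_k)$, now with the internal states weighted in reversed roles. Expanding gives \eqref{eq:S11-PA}: the numerator terms $-a_{r,k}p_k(a_{t,k}^2+\nu_k^2)$ and $2(1-p_k\gamma_k)a_{t,k}\nu_k$ arise directly.
\end{itemize}
The power statements then follow from the unit-guided-power normalization \eqref{eq:power-normalized-electric-field}--\eqref{eq:power-normalized-magnetic-field}. Because every HE$_{11}$ amplitude carries power equal to its modulus squared, $|S_{11,k}^{\rm PA}|^2$ and $|S_{21,k}^{\rm PA}|^2$ are the reflected and transmitted power fractions.

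\textbf{Main obstacle.} The hardest step is pinning down the signs and phase references: which of $\pm a_{r,k}$ applies at each transition from the pinched side, and where the factors $p_k$ attach to $\nu_k$ and $\gamma_k$. The latter depends on whether the spectral phase $e^{-j\beta_qL_{{\rm P},k}}$ already accounts for the full traversal. I would first fix the reference planes at the two transitions and the forward--backward convention $\overline{\mathbf E}_t^-=\overline{\mathbf E}_t^+$, $\overline{\mathbf H}_t^-=-\overline{\mathbf H}_t^+$. I would then check the system against two limits:
\begin{itemize}
\item $\nu_k=\gamma_k=\mu_k=0$ must recover the classical Fabry--Pérot slab formula with $S_{21}=a_{t,k}^2p_k/(1-a_{r,k}^2p_k^2)$;
\item $\kappa_k\to1$ must recover $S_{21,k}^{\rm PA}=e^{-j\beta(a)L_{{\rm P},k}}$ and $S_{11,k}^{\rm PA}=0$.
\end{itemize}
Any sign choice inconsistent with these limits would be corrected before expanding the general case.
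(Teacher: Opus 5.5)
Your proposal is essentially the paper's proof: your $x_k$ and $y_k$ are the paper's $x_{1,k}$ and $x_{2,k}$, you posit the same self-consistent $2\times 2$ system with determinant $\Delta_k$, and you read off $S_{21,k}^{\rm PA}=\mu_k+p_k(a_{t,k}x_k+\nu_k y_k)$ and $S_{11,k}^{\rm PA}=a_{r,k}+p_k(a_{t,k}y_k+\nu_k x_k)$ exactly as the paper does, with the power fractions following from the unit-power normalization. Your two limit checks are consistent with the result, but since $\nu_k=\gamma_k=\mu_k=0$ in both, they cannot settle where $p_k$ attaches to $\nu_k$ and $\gamma_k$; the paper does not derive that placement either, and simply asserts the self-consistent relations.
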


\begin{IEEEproof}
Consider a unit-amplitude HE$_{11}$ field incident on the PA and denote
by $x_{1,k}$ and $x_{2,k}$ the guided field amplitudes launched inside
the pinched section from the first and second transitions, respectively,
so that the corresponding fields arriving at the opposite transitions
are $p_kx_{1,k}$ and $p_kx_{2,k}$. Within the adopted modal representation, the cylindrical spectrum
provides the coupling between the guided HE$_{11}$ states associated
with the two radius transitions. In particular, propagation of the
spectrum excited by the incident HE$_{11}$ field and its coupling to
the internal guided state gives the contribution $\nu_k$, whereas the
corresponding coupling generated by an internal HE$_{11}$ field gives
$\gamma_k$. Accordingly, after the cylindrical components propagate or decay
between the two transitions and couple back into the guided field, the
internal guided-state interaction is described by $x_{1,k}$ and
$x_{2,k}$. Accounting for the spectrum-mediated coupling together with
the local transmission and internal reflection amplitudes $a_{t,k}$
and $-a_{r,k}$, these amplitudes satisfy the self-consistent relations $\left(1-p_k\gamma_k\right)x_{1,k}
+a_{r,k}p_kx_{2,k}=a_{t,k}$ and
$a_{r,k}p_kx_{1,k}
+\left(1-p_k\gamma_k\right)x_{2,k}=\nu_k$, which can be rewritten as{\small
\begin{align}
x_{1,k}
&=
\frac{
\left(1-p_k\gamma_k\right)a_{t,k}
-a_{r,k}p_k\nu_k
}{
\Delta_k
},
\label{eq:x1-k-solution}
\\
x_{2,k}
&=
\frac{
\left(1-p_k\gamma_k\right)\nu_k
-a_{r,k}p_ka_{t,k}
}{
\Delta_k
}.
\label{eq:x2-k-solution}
\end{align}
}In addition, the guided field transmitted through the second transition
consists of the coupling contribution $\mu_k$ between the two
unperturbed sections and the contributions associated with $x_{1,k}$
and $x_{2,k}$, yielding
\begin{equation}
\small
S_{21,k}^{\rm PA}
=
\mu_k
+
p_k
\left[
a_{t,k}x_{1,k}
+
\nu_kx_{2,k}
\right].
\label{eq:S21-PA-before}
\end{equation}
Therefore, substituting \eqref{eq:x1-k-solution} and
\eqref{eq:x2-k-solution} into \eqref{eq:S21-PA-before} gives
\eqref{eq:S21-PA}. Similarly, the reflected guided field consists of the
direct reflection $a_{r,k}$ and the contributions returning from the
pinched section through $x_{1,k}$ and $x_{2,k}$, yielding
\begin{equation}
\small
S_{11,k}^{\rm PA}
=
a_{r,k}
+
p_k
\left[
a_{t,k}x_{2,k}
+
\nu_kx_{1,k}
\right].
\label{eq:S11-PA-before}
\end{equation}
Thus, substituting \eqref{eq:x1-k-solution} and
\eqref{eq:x2-k-solution} into \eqref{eq:S11-PA-before} gives
\eqref{eq:S11-PA}, which concludes the proof.
\end{IEEEproof}
\begin{remark}
The pinched-section length $L_{{\rm P},k}$ determines both the
phase accumulated between the two radius transitions through
$p_k$ and the propagation or decay of the cylindrical components
through $\mu_k$, $\nu_k$, and $\gamma_k$. Consequently, varying
$L_{{\rm P},k}$ changes the coherent interaction between the fields
associated with the two transitions and, therefore, the amplitude
and phase of the transmitted guided signal. The resulting
representation concerns finite pinched sections with two
spatially separated radius transitions, rather than the
limit $L_{{\rm P},k}\to 0$.
\end{remark}
\begin{remark}
The adoption of the fundamental HE$_{11}$ mode as the guided
state follows from the considered single-mode operating regime
and does not exclude the radiative and evanescent fields excited
at the radius transitions. In particular, the propagation or
decay of these additional fields across the finite pinched
section and their subsequent recoupling into the guided
HE$_{11}$ state are incorporated through $\mu_k$, $\nu_k$,
and $\gamma_k$. Consequently, the scattering coefficients
characterize the complex reflected and transmitted guided
signals, while radiation constitutes an additional channel
through which the incident guided power is redistributed.
\end{remark}

\section{Far-Field Radiation and Pinching Antenna Gain}
\label{sec:pa-pattern}

As the guided signal reaches the \(k\)-th PA, the reduction of the
waveguide radius changes the propagation constant and the HE$_{11}$
field distribution over the finite pinched section, giving rise to
transmitted and reflected guided components together with radiation.
Although the scattering response characterizes the guided signal that
is reflected and transmitted by the PA, it does not determine the
angular and polarization characteristics of the radiated field, which
are required to characterize how the signals radiated by different PAs
combine in space. Since the pinched section itself constitutes the
radiating region, its far-field response must therefore be obtained
from the electric field established over the same finite geometry,
thereby preserving its dependence on the pinching ratio $\kappa_k$,
the pinched-section length $L_{{\rm P},k}$, and the corresponding
forward and backward HE$_{11}$ field components.

Relative to the unperturbed radius-$a$ reference waveguide, the reduced-radius section introduces a localized permittivity perturbation over the annular region $b_k\leq\rho\leq a$ and the finite length $L_{{\rm P},k}$. According to the volume-equivalence principle, this perturbation can be represented by an equivalent polarization-current distribution confined to this region, whose generated field outside the perturbation volume corresponds to the field radiated by the PA \cite{Balanis2012Advanced}. Accordingly, a local coordinate system is
centered at the $k$-th PA with its $z$-axis along the waveguide, while
a far-field observation point is written as
$\mathbf r=R\widehat{\mathbf u}(\theta,\phi)$, where $R$ denotes
the observation distance and $\widehat{\mathbf u}(\theta,\phi)$ the
observation direction. Using the electric field established within the
pinched section to determine the equivalent polarization current, the
corresponding radiation integral yields the complex vector far-field
response of the $k$-th PA, as stated in the following proposition.
\begin{proposition}
\label{prop:PA-far-field}
For the \(k\)-th PA characterized by the pinching ratio
\(\kappa_k=b_k/a\) and the pinched-section length \(L_{{\rm P},k}\),
the far-field electric field is given by
\begin{equation}
\small
\begin{aligned}
\mathbf E_{{\rm PA},k}^{\rm ff}(R,\theta,\phi)
={}&
\frac{e^{-jk_0R}}{R}
\Big[
F_{\theta,k}(\theta,\phi)\hat{\boldsymbol{\theta}}
+
F_{\phi,k}(\theta,\phi)\hat{\boldsymbol{\phi}}
\Big],
\end{aligned}
\label{eq:PA-far-field}
\end{equation}
where
\begingroup
\small
\begin{subequations}
\label{eq:PA-pattern-components}
\begin{flalign}
&F_{\theta,k}(\theta,\phi)
=
\frac{\omega^2\mu_0\epsilon_0(1-\epsilon_r)}{4}
\cos\phi
\Big[
A_k^{+}Z_k^{+}(\theta)R_{\theta,k}^{+}(\theta)
\notag\\[-1mm]
&\quad
-
A_k^{-}Z_k^{-}(\theta)R_{\theta,k}^{-}(\theta)
\Big],
&&
\label{eq:PA-pattern-theta}
\\[1mm]
&F_{\phi,k}(\theta,\phi)
=
\frac{\omega^2\mu_0\epsilon_0(1-\epsilon_r)}{4}
\sin\phi
\Big[
A_k^{+}Z_k^{+}(\theta)R_{\phi,k}^{+}(\theta)
\notag\\[-1mm]
&\quad
-
A_k^{-}Z_k^{-}(\theta)R_{\phi,k}^{-}(\theta)
\Big].
&&
\label{eq:PA-pattern-phi}
\end{flalign}
\end{subequations}
\endgroup
with $A_k^{+}=x_{1,k}$, and $A_k^{-}=p_kx_{2,k}$, \(Z_k^\pm(\theta)=
\bigl(e^{j[k_0\cos\theta\mp\beta(b_k)]L_{{\rm P},k}}-1\bigr)/
\bigl(j[k_0\cos\theta\mp\beta(b_k)]\bigr)\),
while the radial factors \(R_{\theta,k}^{\pm}(\theta)\) and
\(R_{\phi,k}^{\pm}(\theta)\) are equal to
\begingroup
\footnotesize
\begin{subequations}
\label{eq:PA-radial-factors}
\begin{flalign}
&R_{\theta,k}^{\pm}(\theta)
=
\frac{J_1\!\bigl(u(b_k)\bigr)}
{\sqrt{P_{\rm g}(b_k)}K_1\!\bigl(w(b_k)\bigr)}
\Bigg\{
\frac{\pm b_k^2}
{k_0w^2(b_k)(\epsilon_r-\cos^2\theta)}
\Bigg[
&&\notag\\[-1mm]
&\Big(
B_\theta\!(\theta)\omega\mu_0D(b_k)
-
A_\theta\!(\theta)\beta\!(b_k)C(b_k)\cos\theta
\Big)I_{1,k}
&&\notag\\[-1mm]
&\!\!+\!\Big(
B_\theta(\theta)\beta\!(b_k)C(b_k)
\!-\!
A_\theta\!(\theta)\omega\mu_0D(b_k)\cos\theta
\Big)I_{2,k}\!
\Bigg]\!+\!C(b_k)A_\theta\!(\theta)I_{0,k}\!
\Bigg\},
&&
\label{eq:PA-radial-theta}
\\[1mm]
&R_{\phi,k}^{\pm}(\theta)
=
\frac{J_1\!\bigl(u(b_k)\bigr)}
{\sqrt{P_{\rm g}(b_k)}K_1\!\bigl(w(b_k)\bigr)}
\Bigg\{
\frac{\mp b_k^2}
{k_0w^2(b_k)(\epsilon_r-\cos^2\theta)}
\Bigg[
&&\notag\\[-1mm]
&\Big(
B_\phi(\theta)\omega\mu_0D(b_k)
+
A_\phi(\theta)\beta(b_k)C(b_k)\cos\theta
\Big)I_{1,k}
&&\notag\\[-1mm]
&\!\!+\!\Big(
B_\phi\!(\theta)\beta\!(b_k)C\!(b_k)
\!+\!\!
A_\phi\!(\theta)\omega\mu_0D(b_k)\cos\theta
\Big)\!I_{2,k}\!
\Bigg]\!+\!C(b_k)A_\phi\!(\theta)I_{0,k}
\Bigg\}.
&&
\label{eq:PA-radial-phi}
\end{flalign}
\end{subequations}
\endgroup
Moreover, \(I_{0,k}\), \(I_{1,k}\), and
\(I_{2,k}\) are given by \eqref{eq:PA-I0}--\eqref{eq:PA-I2},
respectively, with
\(\left[f(\rho)\right]_{b_k}^{a}=f(a)-f(b_k)\), while
\(A_\nu(\theta)\) and \(B_\nu(\theta)\), for
\(\nu\in\{\theta,\phi\}\), denote the amplitudes of the reciprocal
field inside the waveguide and are given by
\eqref{eq:PA-reciprocal-A} and \eqref{eq:PA-reciprocal-B},
respectively. Finally, $H_1^{(2)}(\cdot)$ denotes the first-order
Hankel function of the second kind.
\end{proposition}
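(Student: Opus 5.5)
The far-field expression will follow from the volume-equivalence principle combined with a reciprocity evaluation of the radiation integral. Relative to the radius-$a$ reference guide, the pinched section replaces dielectric by air over the annulus $b_k\leq\rho\leq a$, $0\leq z\leq L_{{\rm P},k}$. This defines an equivalent polarization current
\[
\mathbf J_{\rm eq}=j\omega\epsilon_0(1-\epsilon_r)\mathbf E,
\]
where $\mathbf E$ is the total field inside this region. Within the pinched section, I take the field as the superposition of the internal forward and backward HE$_{11}$ waves. Using the amplitudes $x_{1,k}$ and $x_{2,k}$ from the proof of Proposition~\ref{prop:PA-Sparameters}, the forward wave launched at $z=0$ has amplitude $A_k^{+}=x_{1,k}$. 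The backward wave, referred to the same plane $z=0$, has amplitude $A_k^{-}=p_kx_{2,k}$. Both are written with the unit-power fields \eqref{eq:power-normalized-electric-field} for $r=b_k$. Since $\rho\geq b_k$, only the exterior branch of \eqref{eq:psi-he11} enters. This branch carries the prefactor $J_1(u(b_k))/[\sqrt{P_{\rm g}(b_k)}K_1(w(b_k))]$ common to $R^{\pm}_{\theta,k}$ and $R^{\pm}_{\phi,k}$.

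Next, I evaluate $\hat{\boldsymbol\nu}\cdot\mathbf E^{\rm ff}$ for $\nu\in\{\theta,\phi\}$ by reciprocity rather than with the free-space Green's function. Because the current sits next to a dielectric cylinder, I will use the reciprocal field of a plane wave incident from direction $(\theta,\phi)$ on the infinite radius-$b_k$ cylinder. This is the origin of $\epsilon_r-\cos^2\theta$ and of the Hankel function $H_1^{(2)}$ in the scattered part of the reciprocal field. Its $m=1$ harmonic inside the annulus is given by the amplitudes $A_\nu(\theta)$ and $B_\nu(\theta)$ of \eqref{eq:PA-reciprocal-A}--\eqref{eq:PA-reciprocal-B}. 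The far field is then
\[
\hat{\boldsymbol\nu}\cdot\mathbf E^{\rm ff}\propto \frac{e^{-jk_0R}}{R}\int \mathbf J_{\rm eq}\cdot\mathbf E^{\rm rec}_\nu\,{\rm d}V .
\]
The factor $-j\omega\mu_0/(4\pi)$ times $j\omega\epsilon_0(1-\epsilon_r)$, together with the $2\pi$ that remains after the $\phi$ integration below, gives the prefactor $\omega^2\mu_0\epsilon_0(1-\epsilon_r)/4$.

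The volume integral then separates into three factors.

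\textbf{Longitudinal factor.} The HE$_{11}$ dependence $e^{\mp j\beta(b_k)z}$ multiplies the reciprocal dependence $e^{jk_0z\cos\theta}$. Integrating over $[0,L_{{\rm P},k}]$ gives $Z_k^{\pm}(\theta)$ directly.

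\textbf{Azimuthal factor.} The products of $\cos\phi'$ and $\sin\phi'$ from \eqref{eq:Et-he11} with the $m=1$ reciprocal harmonic leave the $\cos\phi$ factor in $F_{\theta,k}$ and the $\sin\phi$ factor in $F_{\phi,k}$.

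\textbf{Radial factor.} The longitudinal component $C(b_k)\Psi$ paired with the reciprocal $z$-component yields $C(b_k)A_\nu I_{0,k}$. The transverse components \eqref{eq:Et-he11}, with $k_t^2=-w^2/b_k^2$, pair with the transverse reciprocal field. That field is itself built from $A_\nu$ and $B_\nu$ divided by $k_0^2(\epsilon_r-\cos^2\theta)$ or its air analogue. Grouping by $\Psi/\rho$ and ${\rm d}\Psi/{\rm d}\rho$ produces the $I_{1,k}$ and $I_{2,k}$ terms with coefficients $\omega\mu_0D(b_k)$ and $\beta(b_k)C(b_k)$.

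The backward wave is handled with the convention $\beta^-=-\beta$, $C^-=-C$, $D^-=D$. Its transverse electric field equals that of the forward wave while $E_z$ flips sign. This is what produces the $\pm$ in front of the bracket in $R^{\pm}$. Comparing with the stated formula, the unbracketed $C(b_k)A_\nu I_{0,k}$ term keeps the same sign for both waves. The relative sign between the two waves then appears as the minus in $A_k^{+}Z_k^{+}R^{+}-A_k^{-}Z_k^{-}R^{-}$, so the overall sign assignment has to be tracked against the convention.

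The main obstacle is this radial step. I must take the reciprocal cylinder field in a form consistent with the integrals $I_{0,k}$--$I_{2,k}$ of \eqref{eq:PA-I0}--\eqref{eq:PA-I2}. I must also reduce the radial integrals of products of $K_1,K_0$ with $J_1$ and $H_1^{(2)}$ terms to the boundary evaluations $[f(\rho)]_{b_k}^{a}$. My first attempt is to use the Lommel-type indefinite integral for the product of two solutions of Bessel equations with different wavenumbers. This applies because $-w^2/b_k^2\neq k_0^2\sin^2\theta$ away from degenerate angles. The sign bookkeeping between the $\hat{\boldsymbol\rho},\hat{\boldsymbol\phi}$ products, which gives the $\mp$ in $R_{\phi}$ versus $\pm$ in $R_{\theta}$ and the $\cos\theta$ placements, would be checked last by differentiating the claimed antiderivatives.
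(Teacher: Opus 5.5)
\paragraph{Verdict: there is a genuine gap.} Your skeleton matches the paper's proof:
\begin{itemize}
\item a volume-equivalence current on the annulus;
\item the pinched-section field approximated by HE$_{11}$ waves with $A_k^{+}=x_{1,k}$ and $A_k^{-}=p_kx_{2,k}$;
\item mixed reciprocity;
\item a factorization into $Z_k^{\pm}$, an azimuthal factor and a radial factor;
\item $R^{\pm}$ equal to $\pm$(transverse overlap) plus (longitudinal overlap).
\end{itemize}
The gap is your choice of reciprocal structure.

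\paragraph{Why the radius-$b_k$ cylinder fails.} The current $j\omega\epsilon_0(1-\epsilon_r)\mathbf E$ on $b_k\leq\rho\leq a$, $0\leq z\leq L_{{\rm P},k}$ is the contrast relative to the unperturbed radius-$a$ guide. The Green tensor, and hence the reciprocal field supplied by mixed reciprocity, must therefore be those of the radius-$a$ guide. Relative to an infinite radius-$b_k$ cylinder, the pinched annulus has no contrast at all. The contrast would instead sit on the semi-infinite dielectric annuli outside $[0,L_{{\rm P},k}]$. Pairing your current with the radius-$b_k$ reciprocal field is therefore not a valid representation.

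It also cannot reproduce the stated result. The annulus would lie outside the radius-$b_k$ cylinder, where the reciprocal field is built from $J_1(k_0\rho\sin\theta)$ and $H_1^{(2)}(k_0\rho\sin\theta)$, with transverse wavenumber $k_0\sin\theta$. The stated formula instead has:
\begin{itemize}
\item $I_{0,k}$--$I_{2,k}$ containing only $J_1(k_0\rho\sqrt{\epsilon_r-\cos^2\theta})$;
\item $R^{\pm}_{\nu,k}$ carrying $1/(\epsilon_r-\cos^2\theta)$;
\item $A_\nu$, $B_\nu$ and $\Delta$ evaluated at $k_0a\sin\theta$ and $k_0a\sqrt{\epsilon_r-\cos^2\theta}$.
\end{itemize}
Your remark about excluding angles where $-w^2/b_k^2=k_0^2\sin^2\theta$ is a symptom of using this exterior field.

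\paragraph{How the radius-$a$ reference (the paper's choice) fixes this.} The annulus lies inside the dielectric, so only the interior $m=\pm1$ harmonic $J_1(q_{\rm i}\rho)$, with $q_{\rm i}=k_0\sqrt{\epsilon_r-\cos^2\theta}$, enters. $H_1^{(2)}$ appears only through the matching at $\rho=a$ that fixes $A_\nu$ and $B_\nu$. The factor $jb_k^2/w^2(b_k)$ of the HE$_{11}$ transverse field, times the $jk_0/q_{\rm i}^2$ of the reciprocal transverse field, gives (up to sign) the prefactor $b_k^2/[k_0w^2(b_k)(\epsilon_r-\cos^2\theta)]$. The radial integrals then involve only $K_0$ and $K_1$ of argument $w(b_k)\rho/b_k$, against $J_1(q_{\rm i}\rho)$ and its derivative. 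They reduce to $I_{0,k}$--$I_{2,k}$ via:
\begin{itemize}
\item $K_1'(x)=-K_0(x)-K_1(x)/x$;
\item integration by parts;
\item the order-one Bessel equation;
\item the Lommel integral, with denominator $q_{\rm i}^2+w^2(b_k)/b_k^2>0$.
\end{itemize}
No radial integrals of $H_1^{(2)}$ arise, and no angles need to be excluded.

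\paragraph{A smaller slip.} $\int_0^{2\pi}\cos\phi'\cos(\phi'-\phi)\,{\rm d}\phi'=\pi\cos\phi$, and the $\nu=\phi$ terms give $\pm\pi\sin\phi$. So it is $\pi$, not $2\pi$, that combines with $1/(4\pi)$ to give the prefactor $1/4$.
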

\begin{figure*}[!t]
\small
\begin{subequations}
\small
\label{eq:PA-radial-overlaps}
\begin{align}
&I_{0,k}=\Biggl[
\frac{\rho}
{k_0^2(\epsilon_r-\cos^2\theta)
+w^2(b_k)/b_k^2}
\Bigg\{
k_0\sqrt{\epsilon_r-\cos^2\theta}
J_2\!\left(
k_0\rho\sqrt{\epsilon_r-\cos^2\theta}
\right)
K_1\!\left(\frac{w(b_k)}{b_k}\rho\right) \notag
\\&-
\frac{w(b_k)J_1\!\left(
k_0\rho\sqrt{\epsilon_r-\cos^2\theta}
\right)
K_2\!\left(\frac{w(b_k)}{b_k}\rho\right)}{b_k}
\Bigg\}
\Biggr]_{b_k}^{a},
\label{eq:PA-I0}
\\
&I_{1,k}=
\left[
\rho
K_1\!\left(\frac{w(b_k)}{b_k}\rho\right)
\frac{\partial}{\partial\rho}
J_1\!\left(
k_0\rho\sqrt{\epsilon_r-\cos^2\theta}
\right)
\right]_{b_k}^{a}+
k_0^2(\epsilon_r-\cos^2\theta)I_{0,k},
\label{eq:PA-I1}
\\
&I_{2,k}=
\left[
K_1\!\left(\frac{w(b_k)}{b_k}\rho\right)
J_1\!\left(
k_0\rho\sqrt{\epsilon_r-\cos^2\theta}
\right)
\right]_{b_k}^{a}.
\label{eq:PA-I2}
\end{align}
\end{subequations}
\hrulefill
\begin{equation}
\small
\begin{split}
A_\nu(\theta)
={}&
-\frac{2\Bigg[
J_1'\!\left(k_0a\sin\theta\right)
-
\frac{
J_1\!\left(k_0a\sin\theta\right)
H_1^{(2)\prime}\!\left(k_0a\sin\theta\right)
}{
H_1^{(2)}\!\left(k_0a\sin\theta\right)
}
\Bigg]}{j\Delta(\theta)}
\times
\begin{cases}
\frac{
J_1'\!\left(
k_0a\sqrt{\epsilon_r-\cos^2\theta}
\right)
}{
\sqrt{\epsilon_r-\cos^2\theta}
}
-
\frac{
J_1\!\left(
k_0a\sqrt{\epsilon_r-\cos^2\theta}
\right)
H_1^{(2)\prime}\!\left(k_0a\sin\theta\right)
}{
\sin\theta\,
H_1^{(2)}\!\left(k_0a\sin\theta\right)
},
& \nu=\theta,
\\
\frac{
\cos\theta\,
J_1\!\left(
k_0a\sqrt{\epsilon_r-\cos^2\theta}
\right)
}{
ak_0
}
\left(
\frac{1}{\epsilon_r-\cos^2\theta}
-
\frac{1}{\sin^2\theta}
\right),
& \nu=\phi.
\end{cases}
\end{split}
\label{eq:PA-reciprocal-A}
\end{equation}
\begin{equation}
\small
\begin{split}
B_\nu(\theta)
={}&
-\frac{2\Bigg[
J_1'\!\left(k_0a\sin\theta\right)
-
\frac{
J_1\!\left(k_0a\sin\theta\right)
H_1^{(2)\prime}\!\left(k_0a\sin\theta\right)
}{
H_1^{(2)}\!\left(k_0a\sin\theta\right)
}
\Bigg]}{j\Delta(\theta)}
\times
\begin{cases}
\frac{
\cos\theta\,
J_1\!\left(
k_0a\sqrt{\epsilon_r-\cos^2\theta}
\right)
}{
ak_0
}
\left(
\frac{1}{\epsilon_r-\cos^2\theta}
-
\frac{1}{\sin^2\theta}
\right),
& \nu=\theta,
\\[5mm]
-\Bigg[
\frac{
\epsilon_r
J_1'\!\left(
k_0a\sqrt{\epsilon_r-\cos^2\theta}
\right)
}{
\sqrt{\epsilon_r-\cos^2\theta}
}
-
\frac{
J_1\!\left(
k_0a\sqrt{\epsilon_r-\cos^2\theta}
\right)
H_1^{(2)\prime}\!\left(k_0a\sin\theta\right)
}{
\sin\theta\,
H_1^{(2)}\!\left(k_0a\sin\theta\right)
}
\Bigg],
& \nu=\phi.
\end{cases}
\end{split}
\label{eq:PA-reciprocal-B}
\end{equation}
\begin{equation}
\small
\begin{split}
\Delta(\theta)
={}&
\Bigg\{
\frac{
\cos\theta\,
J_1\!\left(
k_0a\sqrt{\epsilon_r-\cos^2\theta}
\right)
}{
ak_0
}
\left(
\frac{1}{\epsilon_r-\cos^2\theta}
-
\frac{1}{\sin^2\theta}
\right)
\Bigg\}^{2}
-
\Bigg\{
\frac{
J_1'\!\left(
k_0a\sqrt{\epsilon_r-\cos^2\theta}
\right)
}{
\sqrt{\epsilon_r-\cos^2\theta}
}
\!\!-\!
\frac{
J_1\!\left(
k_0a\sqrt{\epsilon_r-\cos^2\theta}
\right)
H_1^{(2)\prime}\!\left(k_0a\sin\theta\right)
}{
\sin\theta\,
H_1^{(2)}\!\left(k_0a\sin\theta\right)
}
\Bigg\}
\\
&\times
\Bigg\{
\frac{
\epsilon_r
J_1'\!\left(
k_0a\sqrt{\epsilon_r-\cos^2\theta}
\right)
}{
\sqrt{\epsilon_r-\cos^2\theta}
}
\!-\!
\frac{
J_1\!\left(
k_0a\sqrt{\epsilon_r-\cos^2\theta}
\right)
H_1^{(2)\prime}\!\left(k_0a\sin\theta\right)
}{
\sin\theta\,
H_1^{(2)}\!\left(k_0a\sin\theta\right)
}
\Bigg\}.
\end{split}
\label{eq:PA-reciprocal-denominator}
\end{equation}
\hrulefill
\end{figure*}
\begin{IEEEproof}
 The proof can be found in Appendix A.
\end{IEEEproof}

Having characterized the complex far-field response of the $k$-th PA,
the corresponding radiation power-fraction and gain can be obtained directly
from its radiated-power distribution. In particular, using the standard
far-field power relation \cite{Balanis2016}, the radiation intensity of
the $k$-th PA in the direction $(\theta,\phi)$ is given by
$U_k(\theta,\phi)
=
\frac{1}{2\eta_0}
\left[
\left|F_{\theta,k}(\theta,\phi)\right|^2
+
\left|F_{\phi,k}(\theta,\phi)\right|^2
\right]$. Since the incident HE$_{11}$ mode is normalized to unit guided power,
integrating the radiation intensity over the complete solid angle
directly yields the fraction of the incident guided power radiated by
the $k$-th PA and, hence, its radiation power-fraction as
\begin{equation}
\small
\eta_{{\rm rad},k}
=
\int_{0}^{2\pi}
\int_{0}^{\pi}
U_k(\theta,\phi)
\sin\theta\,
{\rm d}\theta\,{\rm d}\phi.
\label{eq:PA-radiation-efficiency}
\end{equation}
Accordingly, the directivity of the $k$-th PA is obtained from the
angular distribution of the radiated power as $D_k(\theta,\phi)
=
\frac{4\pi U_k(\theta,\phi)}
{\eta_{{\rm rad},k}}$, while the corresponding gain of a PA is given by $G_k(\theta,\phi)
=
\eta_{{\rm rad},k}
D_k(\theta,\phi)$,
which can be rewritten as
\begin{equation}
\small
G_k(\theta,\phi)
=
\frac{2\pi}{\eta_0}
\left[
\left|F_{\theta,k}(\theta,\phi)\right|^2
+
\left|F_{\phi,k}(\theta,\phi)\right|^2
\right].
\label{eq:PA-gain-closed}
\end{equation}
Hence, the pinching ratio $\kappa_k$ and the
pinched-section length $L_{{\rm P},k}$ jointly determine not only the
fraction of guided power extracted and radiated by the PA through
$\eta_{{\rm rad},k}$, but also the angular redistribution of this power
through $D_k(\theta,\phi)$, with their combined effect ultimately
determining the PA gain $G_k(\theta,\phi)$, thus characterizing how an activated PA modifies the guided signal propagating within the waveguide and radiates the extracted power toward the wireless environment.

\section{Simulation Results}
In this section, the accuracy and validity of the proposed
electromagnetic model are evaluated through full-wave electromagnetic
simulations performed using MEEP \cite{Oskooi2010}.  Specifically, the considered lossless cylindrical waveguide is operated at $f_0=28$ GHz and is formed by a low-permittivity dielectric with $\epsilon_r=2.56$, consistent with experimentally reported values for flexible polymeric materials at millimeter-wave frequencies \cite{Hu2023}. Moreover, an unperturbed waveguide radius $a=2.75$ mm is assumed, whereas the pinching ratio is restricted to $0.75\leq\kappa_k\leq1$ to represent realistic and mechanically moderate perturbations consistent with the flexibility of the considered dielectric waveguide. Finally, the pinched-section length is varied as
$L_{{\rm P},k}/\lambda_0\in\{2,4,10\}$, allowing both relatively short and extended pinching regions to be represented, and the incident guided power is normalized to $P_t=1$ W.

\begin{figure*}[t]
    \centering

    % =========================================================
    % L_P,k = 2 lambda_0
    % =========================================================
    \begin{minipage}[t]{0.325\textwidth}
        \centering
        \begin{tikzpicture}
        \begin{axis}[
            width=0.98\linewidth,
            height=0.76\linewidth,
            title={$L_{{\rm P},k}=2\lambda_0$},
            xlabel={$\kappa$},
            ylabel={Power fraction},
            xmin=0.75,
            xmax=1,
            ymin=0,
            ymax=1,
            xtick={0.75,0.80,0.85,0.90,0.95,1.00},
            ytick={0,0.2,0.4,0.6,0.8,1.0},
            grid=major,
            grid style={dotted,gray!40},
            tick label style={font=\scriptsize},
            label style={font=\scriptsize},
            title style={font=\footnotesize},
            legend cell align={left},
            legend style={
                at={(0.97,0.55)},
                anchor=east,
                font=\scriptsize,
                draw=none
            }
        ]

        % |S21|^2
        \addplot[
            blue,
            line width=1.3pt,
            forget plot
        ]
        table[x=kappa,y=value]
        {MODEL_S21_power_L02lambda.dat};

        \addplot[
            blue,
            only marks,
            mark=*,
            mark size=1.5pt,
            forget plot
        ]
        table[x=kappa,y=value]
        {MEEP_S21_power_L02lambda.dat};

        % |S11|^2
        \addplot[
            red,
            line width=1.3pt,
            forget plot
        ]
        table[x=kappa,y=value]
        {MODEL_S11_power_L02lambda.dat};

        \addplot[
            red,
            only marks,
            mark=*,
            mark size=1.5pt,
            forget plot
        ]
        table[x=kappa,y=value]
        {MEEP_S11_power_L02lambda.dat};

        % Black legend
        \addlegendimage{black,line width=1.3pt}
        \addlegendentry{Analytical}

        \addlegendimage{black,only marks,mark=*,mark size=1.5pt}
        \addlegendentry{MEEP}

        \end{axis}
        \end{tikzpicture}
    \end{minipage}
    \hfill
    % =========================================================
    % L_P,k = 4 lambda_0
    % =========================================================
    \begin{minipage}[t]{0.325\textwidth}
        \centering
        \begin{tikzpicture}
        \begin{axis}[
            width=0.98\linewidth,
            height=0.76\linewidth,
            title={$L_{{\rm P},k}=4\lambda_0$},
            xlabel={$\kappa$},
            ylabel={Power fraction},
            xmin=0.75,
            xmax=1,
            ymin=0,
            ymax=1,
            xtick={0.75,0.80,0.85,0.90,0.95,1.00},
            ytick={0,0.2,0.4,0.6,0.8,1.0},
            grid=major,
            grid style={dotted,gray!40},
            tick label style={font=\scriptsize},
            label style={font=\scriptsize},
            title style={font=\footnotesize},
            legend cell align={left},
            legend style={
                at={(0.97,0.55)},
                anchor=east,
                font=\scriptsize,
                draw=none
            }
        ]

        % |S21|^2
        \addplot[
            blue,
            line width=1.3pt,
            forget plot
        ]
        table[x=kappa,y=value]
        {MODEL_S21_power_L04lambda.dat};

        \addplot[
            blue,
            only marks,
            mark=*,
            mark size=1.5pt,
            forget plot
        ]
        table[x=kappa,y=value]
        {MEEP_S21_power_L04lambda.dat};

        % |S11|^2
        \addplot[
            red,
            line width=1.3pt,
            forget plot
        ]
        table[x=kappa,y=value]
        {MODEL_S11_power_L04lambda.dat};

        \addplot[
            red,
            only marks,
            mark=*,
            mark size=1.5pt,
            forget plot
        ]
        table[x=kappa,y=value]
        {MEEP_S11_power_L04lambda.dat};

        % Black legend
        \addlegendimage{black,line width=1.3pt}
        \addlegendentry{Analytical}

        \addlegendimage{black,only marks,mark=*,mark size=1.5pt}
        \addlegendentry{MEEP}

        \end{axis}
        \end{tikzpicture}
    \end{minipage}
    \hfill
    % =========================================================
    % L_P,k = 10 lambda_0
    % =========================================================
    \begin{minipage}[t]{0.325\textwidth}
        \centering
        \begin{tikzpicture}
        \begin{axis}[
            width=0.98\linewidth,
            height=0.76\linewidth,
            title={$L_{{\rm P},k}=10\lambda_0$},
            xlabel={$\kappa$},
            ylabel={Power fraction},
            xmin=0.75,
            xmax=1,
            ymin=0,
            ymax=1,
            xtick={0.75,0.80,0.85,0.90,0.95,1.00},
            ytick={0,0.2,0.4,0.6,0.8,1.0},
            grid=major,
            grid style={dotted,gray!40},
            tick label style={font=\scriptsize},
            label style={font=\scriptsize},
            title style={font=\footnotesize},
            legend cell align={left},
            legend style={
                at={(0.97,0.55)},
                anchor=east,
                font=\scriptsize,
                draw=none
            }
        ]

        % |S21|^2
        \addplot[
            blue,
            line width=1.3pt,
            forget plot
        ]
        table[x=kappa,y=value]
        {MODEL_S21_power_L10lambda.dat};

        \addplot[
            blue,
            only marks,
            mark=*,
            mark size=1.5pt,
            forget plot
        ]
        table[x=kappa,y=value]
        {MEEP_S21_power_L10lambda.dat};

        % |S11|^2
        \addplot[
            red,
            line width=1.3pt,
            forget plot
        ]
        table[x=kappa,y=value]
        {MODEL_S11_power_L10lambda.dat};

        \addplot[
            red,
            only marks,
            mark=*,
            mark size=1.5pt,
            forget plot
        ]
        table[x=kappa,y=value]
        {MEEP_S11_power_L10lambda.dat};

        % Black legend
        \addlegendimage{black,line width=1.3pt}
        \addlegendentry{Analytical}

        \addlegendimage{black,only marks,mark=*,mark size=1.5pt}
        \addlegendentry{MEEP}

        \end{axis}
        \end{tikzpicture}
    \end{minipage}

    \caption{Scattering responses versus $\kappa$ for different values of $L_{{\rm P},k}$: (blue) $|S_{21}|^2$, (red) $|S_{11}|^2$.}
    \label{fig:Sparam_power_all_lengths}
\end{figure*}

Fig.~\ref{fig:Sparam_power_all_lengths} illustrates the reflected and
transmitted power fractions as a function of the pinching ratio
$\kappa$ for different pinched-section lengths. In all examined cases,
the analytical results closely follow the MEEP results, validating the
accuracy of the proposed electromagnetic model. As $\kappa$ decreases,
the radius perturbation becomes stronger and $|S_{21}|^2$ decreases,
showing that a larger portion of the incident guided power is removed
from the propagating HE$_{11}$ mode, while $|S_{11}|^2$ remains below
$-25$ dB over the entire examined range, indicating that the reduction
of the transmitted power is associated with radiation. Importantly, the
transmitted power also exhibits a non-monotonic dependence on
$L_{{\rm P},k}$, as illustrated for $\kappa=0.75$, where increasing
$L_{{\rm P},k}$ from $2\lambda_0$ to $4\lambda_0$ reduces
$|S_{21}|^2$ from $0.842$ to $0.822$, while a further increase to
$10\lambda_0$ raises it to $0.870$, since the field propagating inside
the pinched section accumulates a length-dependent phase before being partially recoupled to the unperturbed waveguide at the second radius transition. Hence, the scattering response of a PA is characterized by its geometry, including both the pinching ratio and the PA length, and thus, an accurate characterization of the radiated power and the power remaining available along the waveguide requires the PA geometry.

\begin{figure}[t]
    \centering
    \begin{tikzpicture}
    \begin{axis}[
        width=0.9\columnwidth,
        height=0.7\columnwidth,
        xlabel={$\kappa$},
        ylabel={$\mathrm{Arg}\{S_{21}\}$ (rad)},
        xmin=0.75,
        xmax=1,
        ymin=0,
        ymax=6.28318530718,
        xtick={0.75,0.80,0.85,0.90,0.95,1.00},
        ytick={
            0,
            1.57079632679,
            3.14159265359,
            4.71238898038,
            6.28318530718
        },
        yticklabels={
            $0$,
            $\frac{\pi}{2}$,
            $\pi$,
            $\frac{3\pi}{2}$,
            $2\pi$
        },
        grid=major,
        grid style={dotted,gray!40},
        tick label style={font=\scriptsize},
        label style={font=\scriptsize},
        unbounded coords=jump,
        legend cell align={left},
        legend style={
            at={(0.98,0.98)},
            anchor=north east,
            font=\scriptsize,
            draw=none
        }
    ]

    % L_P,k = 2 lambda_0
    \addplot[
        black,
        line width=1.3pt,
        forget plot
    ]
    table[x=kappa,y=value]
    {MODEL_S21_phase_rad_L02lambda.dat};

    \addplot[
        black,
        only marks,
        mark=*,
        mark size=1.5pt,
        forget plot
    ]
    table[x=kappa,y=value]
    {MEEP_S21_phase_rad_L02lambda.dat};

    % L_P,k = 4 lambda_0
    \addplot[
        red,
        line width=1.3pt,
        forget plot
    ]
    table[x=kappa,y=value]
    {MODEL_S21_phase_rad_L04lambda.dat};

    \addplot[
        red,
        only marks,
        mark=*,
        mark size=1.5pt,
        forget plot
    ]
    table[x=kappa,y=value]
    {MEEP_S21_phase_rad_L04lambda.dat};

    % L_P,k = 10 lambda_0
    \addplot[
        blue,
        line width=1.3pt,
        forget plot
    ]
    table[x=kappa,y=value]
    {MODEL_S21_phase_rad_L10lambda.dat};

    \addplot[
        blue,
        only marks,
        mark=*,
        mark size=1.5pt,
        forget plot
    ]
    table[x=kappa,y=value]
    {MEEP_S21_phase_rad_L10lambda.dat};

    % Legend only for line style
    \addlegendimage{black,line width=1.3pt}
    \addlegendentry{Analytical}

    \addlegendimage{black,only marks,mark=*,mark size=1.5pt}
    \addlegendentry{MEEP}

    \end{axis}
    \end{tikzpicture}

    \caption{$\mathrm{Arg}\{S_{21}\}$ versus $\kappa$ for different values of $L_{{\rm P},k}$: (black) $L_{{\rm P},k}=2\lambda_0$, (red) $L_{{\rm P},k}=4\lambda_0$, (blue) $L_{{\rm P},k}=10\lambda_0$.}
    \label{fig:S21_phase}
\end{figure}

Fig.~\ref{fig:S21_phase} illustrates the phase of $S_{21}$ as a function
of $\kappa$ for different pinched-section lengths. As it can be observed, the analytical results closely follow the MEEP results,
validating the phase characterization of the proposed electromagnetic
model. Additionally, it can be seen that as $\kappa$ varies, the propagation constant of the pinched section changes with respect to that of the unperturbed waveguide, so the guided signal accumulates a geometry-dependent phase while propagating through the PA, which is subsequently carried by the transmitted HE$_{11}$ mode along the downstream waveguide. This effect becomes more pronounced as $L_{{\rm P},k}$ increases, since the propagation-constant mismatch acts over a longer distance and therefore produces a larger phase excursion, with $L_{{\rm P},k}=10\lambda_0$ spanning the entire $[0,2\pi)$ range as $\kappa$ varies. Hence, the PA geometry does not only determine how much guided power remains after pinching, but also the phase state of the signal transmitted along the downstream waveguide, making the pinching ratio and pinched-section length relevant physical parameters for PAS transmission.

\begin{figure*}[t]
    \centering
    \scalebox{1}[0.75]{%
        \includegraphics[width=0.75\textwidth]{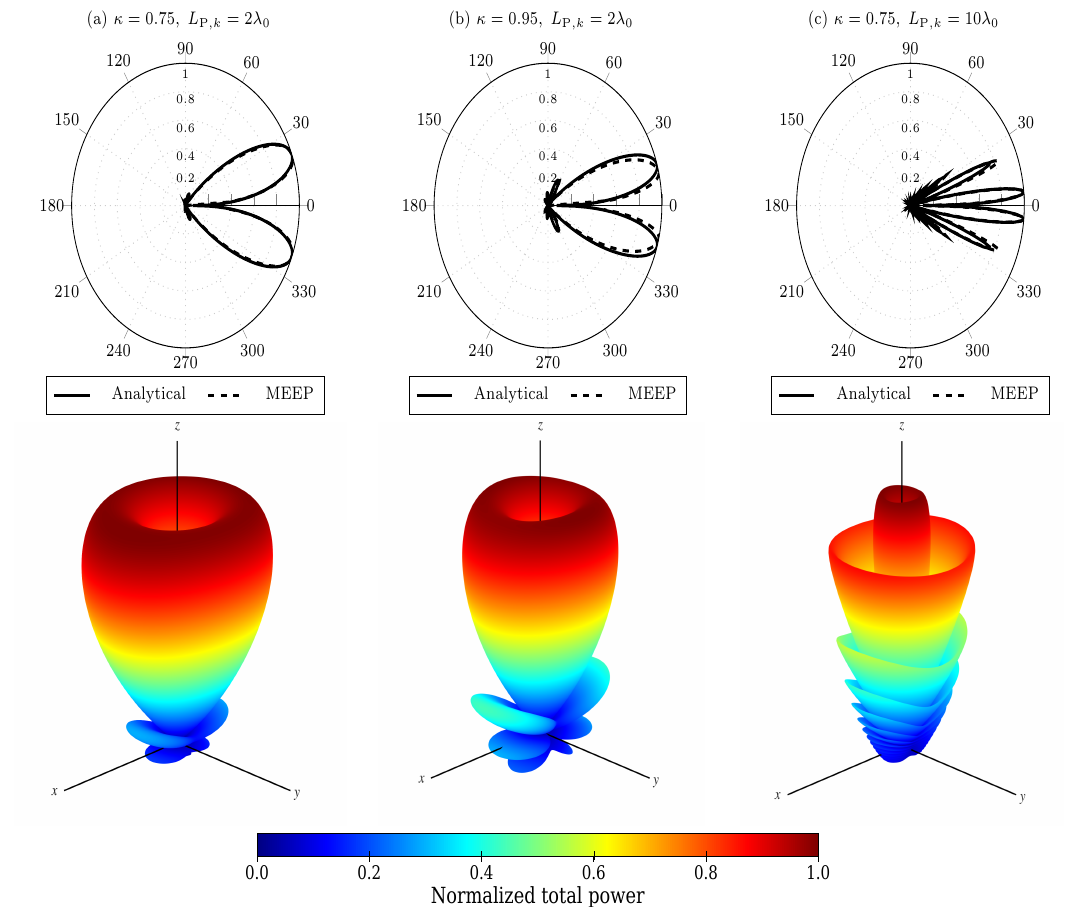}
    }
    \caption{Normalized far-field radiation patterns in the
$\phi=0^\circ$ principal plane, together with the corresponding 3D patterns, for (a) $\kappa=0.75$, $L_{{\rm P},k}=2\lambda_0$, (b) $\kappa=0.95$, $L_{{\rm P},k}=2\lambda_0$, and (c) $\kappa=0.75$, $L_{{\rm P},k}=10\lambda_0$}
    \label{fig:polar_patterns}
\end{figure*}

In Fig.~\ref{fig:polar_patterns}, the normalized far-field radiation
patterns are illustrated for representative PA geometries, where the
close agreement between the analytical and MEEP results validates the
ability of the proposed model to capture the angular redistribution of
the radiated power. In more detail, for the same pinched-section length, varying $\kappa$ modifies both the angular extent of the dominant radiation
region and the relative strength of the secondary lobes, whereas
increasing $L_{{\rm P},k}$ produces a considerably stronger reshaping
of the pattern. In particular, the longer pinched section introduces a
larger phase variation along the radiating region, causing its field
contributions to combine differently in the far field and resulting in
narrower angular features and a more pronounced sidelobe structure for
$L_{{\rm P},k}=10\lambda_0$. Consequently, the pinching ratio and pinched-section length determine not only the amount of radiated power but also its directional distribution, thereby acting as intrinsic physical parameters for shaping the PA radiation characteristics.

\begin{figure}
    \centering
    \begin{tikzpicture}
    \begin{axis}[
        width=0.9\columnwidth,
        height=0.7\columnwidth,
        xlabel={$\kappa$},
        ylabel={$\eta_{{\rm rad},k}$},
        xmin=0.75,
        xmax=1,
        ymin=0,
        ymax=0.30,
        xtick={0.75,0.80,0.85,0.90,0.95,1.00},
        ytick={0,0.05,0.10,0.15,0.20,0.25,0.30},
        yticklabel style={
            /pgf/number format/fixed,
            /pgf/number format/precision=2
        },
        grid=major,
        grid style={dotted,gray!40},
        tick label style={font=\scriptsize},
        label style={font=\scriptsize},
        unbounded coords=jump,
        legend cell align={left},
        legend style={
            at={(0.98,0.98)},
            anchor=north east,
            font=\scriptsize,
            draw=none
        }
    ]

    % =========================================================
    % L_P,k = 2 lambda_0
    % =========================================================

    \addplot[
        black,
        line width=1.3pt,
        forget plot
    ]
    table[
        x=kappa,
        y=value
    ]
    {MODEL_efficiency_farfield_L02lambda.dat};

    \addplot[
        black,
        only marks,
        mark=*,
        mark size=1.5pt,
        forget plot
    ]
    table[
        x=kappa,
        y=value
    ]
    {MEEP_efficiency_balance_L02lambda.dat};

    % =========================================================
    % L_P,k = 4 lambda_0
    % =========================================================

    \addplot[
        red,
        line width=1.3pt,
        forget plot
    ]
    table[
        x=kappa,
        y=value
    ]
    {MODEL_efficiency_farfield_L04lambda.dat};

    \addplot[
        red,
        only marks,
        mark=*,
        mark size=1.5pt,
        forget plot
    ]
    table[
        x=kappa,
        y=value
    ]
    {MEEP_efficiency_balance_L04lambda.dat};

    % =========================================================
    % L_P,k = 10 lambda_0
    % =========================================================

    \addplot[
        blue,
        line width=1.3pt,
        forget plot
    ]
    table[
        x=kappa,
        y=value
    ]
    {MODEL_efficiency_farfield_L10lambda.dat};

    \addplot[
        blue,
        only marks,
        mark=*,
        mark size=1.5pt,
        forget plot
    ]
    table[
        x=kappa,
        y=value
    ]
    {MEEP_efficiency_balance_L10lambda.dat};

    % Legend only for line style
    \addlegendimage{black,line width=1.3pt}
    \addlegendentry{Analytical}

    \addlegendimage{black,only marks,mark=*,mark size=1.5pt}
    \addlegendentry{MEEP}

    \end{axis}
    \end{tikzpicture}

    \caption{Radiation power-fraction versus $\kappa$ for different values of
    $L_{{\rm P},k}$: (black) $L_{{\rm P},k}=2\lambda_0$, (red)
    $L_{{\rm P},k}=4\lambda_0$, and (blue)
    $L_{{\rm P},k}=10\lambda_0$.}
    \label{fig:radiation_efficiency}
\end{figure}

In Fig.~\ref{fig:radiation_efficiency}, the radiation power-fraction
$\eta_{{\rm rad},k}$ is illustrated as a function of $\kappa$ for
different pinched-section lengths. As $\kappa$ decreases, the resulting radius reduction enhances the coupling of the guided HE$_{11}$ field to radiation and therefore increases $\eta_{{\rm rad},k}$, whereas all curves converge to zero as $\kappa\rightarrow1$, consistently with the recovery of the unperturbed waveguide. At the same time, the radiation power-fraction
exhibits a non-monotonic dependence on $L_{{\rm P},k}$, since extending
the pinched section changes the relative phase accumulated by the
field contributions radiated along its finite length rather than
simply increasing the extracted power. Consequently, the pinched-section length yielding the largest radiation power-fraction among the examined values changes with the pinching ratio, further demonstrating that the radiation behavior of a PA is jointly characterized by the pinching ratio and pinched-section length.
\begin{figure}
    \centering
    \begin{tikzpicture}
    \begin{axis}[
        width=0.9\columnwidth,
        height=0.70\columnwidth,
        ylabel={$\gamma_r$ (dB)},
        xmin=0.4,
        xmax=3.6,
        ymin=0,
        ymax=14,
        xtick={1,2,3},
        xticklabels={,,},
        ytick={0,2,4,6,8,10,12,14},
        grid=major,
        grid style={dotted,gray!40},
        tick label style={font=\scriptsize},
        label style={font=\scriptsize},
        ybar,
        bar width=12pt,
        clip=false,
        nodes near coords={
            \pgfmathprintnumber[
                fixed,
                precision=2
            ]{\pgfplotspointmeta}
        },
        every node near coord/.append style={
            font=\scriptsize,
            anchor=south,
            yshift=1pt
        },
        legend cell align={left},
        legend columns=3,
        legend style={
            at={(0.5,1.03)},
            anchor=south,
            font=\scriptsize,
            draw=none,
            column sep=5pt
        }
    ]

    % Case 1
    \addplot[
        fill=orange!80,
        draw=black
    ]
    coordinates {(1,9.4519)};
    \addlegendentry{Case 1}

    % Case 2
    \addplot[
        fill=green!60!black!20,
        draw=black
    ]
    coordinates {(2,0.5091)};
    \addlegendentry{Case 2}

    % Case 3
    \addplot[
        fill=cyan!45,
        draw=black
    ]
    coordinates {(3,11.8274)};
    \addlegendentry{Case 3}

    \end{axis}
    \end{tikzpicture}

    \caption{Received SNR for the considered transmission cases.}
    \label{fig:system_snr_comparison}
\end{figure}

Finally, Fig.~\ref{fig:system_snr_comparison} shows the impact of the PA radiation characteristics on the received SNR, where, following the deployment considered in \cite{TyrovolasPerformancePAS}, a $30\times10$~m$^2$ service area is considered, with the dielectric waveguide extending along the center of the ceiling from $(0,5,3)$~m to $(30,5,3)$~m and $10^4$ users uniformly distributed over the floor. The local PA $z$-axis is aligned with the waveguide direction, corresponding to the global $x$-axis of the service area, while the local $x$-axis, which defines the polarization of the incident HE$_{11}$ mode, is aligned with the vertical global $z$-axis. The received SNR is evaluated as $\gamma_r=\gamma_t G(\theta,\phi)\left(\frac{\lambda_0}{4\pi R}\right)^2$, with $\gamma_t=10^9$ describing the transmit SNR, $R$ denoting the transmitter--user distance, $G(\theta,\phi)$ representing the gain referred to the incident guided power obtained from the proposed PA model, while the values reported in  Fig.~\ref{fig:system_snr_comparison} correspond to the received SNR averaged over the $10^4$ user locations. In particular, \textit{Case~1} considers an isotropic antenna located at the access point $(0,5,3)$~m, while \textit{Case~2} follows the distance-minimizing PA placement considered in~\cite{TyrovolasPerformancePAS}, such that the PA is positioned at the waveguide point closest to the user and $\kappa$ and $L_{{\rm P},k}$ are selected to maximize the radiated-power fraction $\eta_{{\rm rad},k}$ independently of the user direction. In contrast, \textit{Case~3} jointly optimizes the PA position, $\kappa$, and $L_{{\rm P},k}$ according to the received SNR through a dense numerical search over the admissible PA positions and all considered combinations of $\kappa\in[0.75,0.995]$ and $L_{{\rm P},k}/\lambda_0\in[2,10]$, with respective steps of $0.005$ and $0.05$. As it can be observed, even if \textit{Case~2} minimizes the PA--user distance and maximizes the total radiated-power fraction, it remains $8.94$~dB below \textit{Case~1}, since the resulting PA radiates only a fraction of the incident guided power and neither criterion accounts for how this power is angularly distributed toward the intended user. However, by jointly adapting the PA position and geometry, \textit{Case~3} additionally accounts for the corresponding directional gain, yielding improvements of $2.38$~dB and $11.32$~dB over \textit{Cases~1} and~\textit{2}, respectively. Therefore, the results illustrate how the geometry-dependent electromagnetic response provided by the proposed model can be incorporated into system-level PAS design, since the pinching ratio and pinched-section length jointly determine the radiated-power fraction, its directional distribution, and, consequently, the effectiveness of a given PA position.

\section{Conclusion}

In this paper, we developed an electromagnetic model that directly relates the finite perturbation applied to a dielectric waveguide to the guided-wave and radiation responses of the resulting PA. The proposed formulation jointly characterizes the complex reflection and transmission coefficients and the three-dimensional radiation characteristics, and its accuracy was validated through full-wave MEEP simulations. The results showed that the pinching ratio and pinched-section length jointly control the transmitted power and phase, radiation power-fraction, and radiation pattern, with their interaction producing non-monotonic responses and establishing the PA geometry as a physical design degree of freedom for shaping the radiation characteristics. Moreover, the considered single-PA received-SNR evaluation demonstrated that minimizing the PA--user distance or maximizing the total radiation power-fraction does not necessarily provide the highest received SNR, since the applied perturbation also determines the directional gain and, consequently, the effectiveness of a given PA position. Therefore, the proposed framework establishes a direct connection between the finite deformation-based PA geometry and PAS design, and can provide a basis for extending physics-compliant PAS modeling to multi-PA deployments, multi-mode operation, and other physically realizable waveguide perturbations.
\appendices

\section{Proof of Proposition 2}
The unperturbed radius-$a$ waveguide is adopted as the reference
structure, in which the incident HE${11}$ mode remains bound and does
not radiate. Hence, the radiation generated by the $k$-th PA originates
from the localized permittivity perturbation introduced by the radius
reduction. By representing this perturbation through the
volume-equivalent current $\mathbf J{{\rm eq},k}$, the additional
electric field produced by pinching at an observation point outside
the perturbed region can be expressed through the electric dyadic
Green function of the reference waveguide as~\cite{Balanis2012Advanced}
\begin{equation}
\small
\begin{split}
\mathbf E_{{\rm PA},k}(\mathbf r)
=
-j\omega\mu_0
\iiint_{\mathcal V_{{\rm P},k}}
\overline{\overline{\mathbf G}}_{\rm a}
(\mathbf r,\mathbf r')
\cdot
\mathbf J_{{\rm eq},k}(\mathbf r')
\,{\rm d}V',
\,\,
\mathbf r\notin\mathcal V_{{\rm P},k},
\end{split}
\label{eq:PA-equivalent-field}
\end{equation}
where $\mathbf r'$ denotes a source point,
$\overline{\overline{\mathbf G}}{\rm a}(\mathbf r,\mathbf r')$ is
the outgoing electric dyadic Green function of the reference waveguide,
and $\mathcal V_{{\rm P},k}$ denotes the annular perturbation volume
defined in the local cylindrical coordinates by $b_k\leq\rho\leq a$,
$0\leq\phi'<2\pi$, and $0\leq z\leq L_{{\rm P},k}$. For a far-field
observation point $\mathbf r=R\widehat{\mathbf u}(\theta,\phi)$,
with a fixed observation direction away from the waveguide axis,
the radiated field exhibits the outgoing dependence $e^{-jk_0R}/R$.
Therefore, by letting $\widehat{\boldsymbol{\nu}}\in
{\widehat{\boldsymbol{\theta}},
\widehat{\boldsymbol{\phi}}}$, the corresponding far-field
component is defined as
\begin{equation}
\small
\begin{split}
F_{\nu,k}(\theta,\phi)
=
\lim_{R\rightarrow\infty}
\Big[
R e^{jk_0R}
\widehat{\boldsymbol{\nu}}\cdot
\mathbf E_{{\rm PA},k}
\big(
R\widehat{\mathbf u}(\theta,\phi)
\big)
\Big],
\,\,
\nu\in\{\theta,\phi\}.
\end{split}
\label{eq:PA-far-field-component}
\end{equation}
Defining the far-field Green tensor of the reference waveguide as
$\overline{\overline{\mathbf G}}_{\rm a}^{\infty}
(\theta,\phi;\mathbf r')
=
\lim_{R\rightarrow\infty}
R e^{jk_0R}
\overline{\overline{\mathbf G}}_{\rm a}
\left(
R\widehat{\mathbf u}(\theta,\phi),\mathbf r'
\right)$, then \eqref{eq:PA-far-field-component} can be rewritten as
\begin{equation}
\small
\begin{split}
F_{\nu,k}(\theta,\phi)
=
-j\omega\mu_0\!
\iiint_{\mathcal V_{{\rm P},k}}\!\!\!\!
\widehat{\boldsymbol{\nu}}\cdot
\overline{\overline{\mathbf G}}_{\rm a}^{\infty}
(\theta,\phi;\mathbf r')
\!\cdot\!
\mathbf J_{{\rm eq},k}(\mathbf r')
\,{\rm d}V'.
\end{split}
\label{eq:PA-far-field-green}
\end{equation}
Since reciprocity holds for the reference waveguide, the source and
observation roles in \eqref{eq:PA-far-field-green} can be interchanged.
Hence, according to the mixed-reciprocity relation, the far-field Green
tensor can be mapped to the local field established within the
reference waveguide under a reciprocal plane-wave illumination from
the observation direction as~\cite{Cakoni2018}
\begin{equation}
\small
\begin{split}
4\pi\,
\widehat{\boldsymbol{\nu}}\cdot
\overline{\overline{\mathbf G}}_{\rm a}^{\infty}
(\theta,\phi;\mathbf r')
\cdot
\mathbf J_{{\rm eq},k}(\mathbf r')
=
\mathbf J_{{\rm eq},k}(\mathbf r')
\cdot
\mathbf E_{{\rm a},\nu}^{\rm r}(\mathbf r'),
\end{split}
\label{eq:PA-mixed-reciprocity}
\end{equation}
where $\mathbf E_{{\rm a},\nu}^{\rm r}(\mathbf r')$ denotes the field
established in the radius-$a$ reference waveguide by a unit
$\nu$-polarized plane wave incident from the observation direction.
Within $\mathcal V_{{\rm P},k}$, the radius reduction changes the
permittivity from $\epsilon_r\epsilon_0$ to $\epsilon_0$,
and the volume-equivalence principle therefore gives
\cite{Balanis2012Advanced}
\begin{equation}
\small
\mathbf J_{{\rm eq},k}(\mathbf r')
=
j\omega\epsilon_0(1-\epsilon_r)
\mathbf E_{{\rm P},k}^{\rm tot}(\mathbf r'),
\qquad
\mathbf r'\in V_{{\rm P},k},
\label{eq:PA-equivalent-current}
\end{equation}
where $\mathbf E_{{\rm P},k}^{\rm tot}$ denotes the total electric field established over the perturbation region. Consequently, substituting
\eqref{eq:PA-equivalent-current} and
\eqref{eq:PA-mixed-reciprocity} into
\eqref{eq:PA-far-field-green} yields
\begin{equation}
\small
\begin{split}
F_{\nu,k}(\theta,\phi)
=\!
\frac{\omega^2\mu_0\epsilon_0(1-\epsilon_r)}{4\pi}
\!\!\!\iiint_{V_{{\rm P},k}} \!\!\!
\mathbf E_{{\rm P},k}^{\rm tot}(\mathbf r')\!
\cdot\!
\mathbf E^r_{a,\nu}(\mathbf r')
dV'.
\end{split}
\label{eq:PA-reciprocal-radiation}
\end{equation}
Thus, the far-field radiation pattern is reduced to the overlap between
the electric field established over the pinched section and the
reciprocal field established over the same volume by an excitation
incident from the observation direction.

Within the adopted single-guided-state representation, the electric field
over the finite radius-$b_k$ section is represented by the forward- and
backward-propagating HE$_{11}$ components supported by this uniform
section, whose amplitudes are jointly determined by the two radius
transitions and the propagation between them. In particular, the
cylindrical-mode spectrum excited at each discontinuity accounts for
the interaction between the two transitions through propagation or
decay across the finite pinched section and subsequent recoupling into
the guided HE$_{11}$ state, consistently with the modal treatment of
dielectric-waveguide discontinuities and successive discontinuities
\cite{Khalil2003,Gerard1994}. Accordingly, this interaction is incorporated
into the internal guided amplitudes $x_{1,k}$ and $x_{2,k}$ through
the coupling quantities $\nu_k$ and $\gamma_k$, such that the
interaction-corrected field used for the closed-form radiation
characterization is represented by the corresponding forward- and
backward-propagating HE$_{11}$ fields. Hence, using the normalized
HE$_{11}$ fields supported by the radius-$b_k$ section, the electric
field within the pinched section is expressed as
\begin{equation}
\small
\mathbf E_{{\rm P},k}(\rho,\phi',z)
\simeq
A_k^{+}
\overline{\mathbf E}_{b_k}^{+}(\rho,\phi',z)
+
A_k^{-}
\overline{\mathbf E}_{b_k}^{-}(\rho,\phi',z),
\label{eq:PA-pinched-section-field}
\end{equation}
where $A_k^{+}=x_{1,k}$ and $A_k^{-}=p_kx_{2,k}$. According to the adopted forward--backward waveguide-mode convention \cite{Svendsen2013}, for which
$\beta^{-}(b_k)=-\beta^{+}(b_k)$,
$C^{-}(b_k)=-C^{+}(b_k)$, and
$D^{-}(b_k)=D^{+}(b_k)$, the transverse electric-field profile is
preserved between the two propagation directions whereas the
longitudinal electric component changes sign, and therefore,
substituting \eqref{eq:psi-he11},
\eqref{eq:psi-he11-derivative},
\eqref{eq:Et-he11}, and
\eqref{eq:power-normalized-electric-field} into
\eqref{eq:PA-pinched-section-field} yields
\eqref{eq:PA-pinched-section-field-closed} for
$b_k\leq\rho\leq a$.

\begin{figure*}[!t]
\begin{equation}
\small
\begin{aligned}
&\mathbf E_{{\rm P},k}(\rho,\phi',z)
=
\frac{J_1(u(b_k))}
{\sqrt{P_{\rm g}(b_k)}K_1(w(b_k))}
\Bigg\{
\frac{jb_k^2}{w^2(b_k)}
\Big[
A_k^{+}e^{-j\beta(b_k)z}
+
A_k^{-}e^{j\beta(b_k)z}
\Big]
\Bigg[
\\[-1mm]
&\widehat{\boldsymbol\rho}\cos\phi'
\Bigg(
-\frac{\beta(b_k)C(b_k)w(b_k)}{b_k}
K_0\!\left(\frac{w(b_k)\rho}{b_k}\right)
+
\frac{\omega\mu_0D(b_k)-\beta(b_k)C(b_k)}{\rho}
K_1\!\left(\frac{w(b_k)\rho}{b_k}\right)
\Bigg)
\\[-1mm]
&+
\widehat{\boldsymbol\phi}\sin\phi'
\Bigg(
\frac{\omega\mu_0D(b_k)w(b_k)}{b_k}
K_0\!\left(\frac{w(b_k)\rho}{b_k}\right)
+
\frac{\omega\mu_0D(b_k)-\beta(b_k)C(b_k)}{\rho}
K_1\!\left(\frac{w(b_k)\rho}{b_k}\right)
\Bigg)
\Bigg]
\\[-1mm]
&+
\widehat{\mathbf z}\,
C(b_k)
K_1\!\left(\frac{w(b_k)\rho}{b_k}\right)
\cos\phi'
\Big[
A_k^{+}e^{-j\beta(b_k)z}
-
A_k^{-}e^{j\beta(b_k)z}
\Big]
\Bigg\}.
\end{aligned}
\label{eq:PA-pinched-section-field-closed}
\end{equation}
\vspace{-2mm}
\hrule
\vspace{-1mm}
\end{figure*}
The reciprocal field $\mathbf E_{{\rm a},\nu}^{\rm r}$ is established
when a unit $\nu$-polarized plane wave incident from the observation
direction impinges on the radius-$a$ reference waveguide. Following the
cylindrical-wave representation in~\cite{Tsalamengas1995}, this field
can be expanded in cylindrical harmonics, while the first-order
azimuthal dependence of $\mathbf E_{{\rm P},k}$ implies through
\eqref{eq:PA-reciprocal-radiation} that only the $m=\pm1$ harmonics
contribute to the overlap. Since the PA volume satisfies $\rho<a$, the
reciprocal field required in the radiation integral is its interior
field, whose transverse wavenumber is
$q_{\rm i}(\theta)=k_0\sqrt{\epsilon_r-\cos^2\theta}$, while
$q_{\rm o}(\theta)=k_0\sin\theta$ denotes the corresponding transverse
wavenumber in the surrounding air. Accordingly, the longitudinal
electric and magnetic components of the interior reciprocal field are
\begin{equation}
\small
E_{{\rm a},z,\nu}^{\rm r}
=
e^{jk_0z\cos\theta}
\begin{cases}
A_\theta(\theta)
J_1\!\left(q_{\rm i}(\theta)\rho\right)
\cos(\phi'-\phi),
&
\nu=\theta,
\\[2mm]
-A_\phi(\theta)
J_1\!\left(q_{\rm i}(\theta)\rho\right)
\sin(\phi'-\phi),
&
\nu=\phi,
\end{cases}
\label{eq:PA-reciprocal-longitudinal-E}
\end{equation}
and
\begin{equation}
\small
H_{{\rm a},z,\nu}^{\rm r}
=
\frac{e^{jk_0z\cos\theta}}{\eta_0}
\begin{cases}
B_\theta(\theta)
J_1\!\left(q_{\rm i}(\theta)\rho\right)
\sin(\phi'-\phi),
&
\nu=\theta,
\\[2mm]
-B_\phi(\theta)
J_1\!\left(q_{\rm i}(\theta)\rho\right)
\cos(\phi'-\phi),
&
\nu=\phi,
\end{cases}
\label{eq:PA-reciprocal-longitudinal-H}
\end{equation}
where $\eta_0$ denotes the free-space impedance. Once the longitudinal
components are specified, the corresponding transverse fields follow
directly from Maxwell's equations as{\small
\begin{align}
\mathbf E_{{\rm a},t,\nu}^{\rm r}
&=
\frac{j}{q_{\rm i}^{2}(\theta)}
\left[
k_0\cos\theta\,
\nabla_t E_{{\rm a},z,\nu}^{\rm r}
+
\omega\mu_0
\widehat{\mathbf z}\times
\nabla_t H_{{\rm a},z,\nu}^{\rm r}
\right],
\label{eq:PA-reciprocal-transverse-E}
\\
\mathbf H_{{\rm a},t,\nu}^{\rm r}
&=
\frac{j}{q_{\rm i}^{2}(\theta)}
\left[
k_0\cos\theta\,
\nabla_t H_{{\rm a},z,\nu}^{\rm r}
-
\omega\epsilon_r\epsilon_0
\widehat{\mathbf z}\times
\nabla_t E_{{\rm a},z,\nu}^{\rm r}
\right].
\label{eq:PA-reciprocal-transverse-H}
\end{align}
}Using
$\widehat{\mathbf z}\times\widehat{\boldsymbol\rho}
=\widehat{\boldsymbol\phi}$ together with
$\widehat{\mathbf z}\times\widehat{\boldsymbol\phi}
=-\widehat{\boldsymbol\rho}$, the transverse relations above give the
components required both for the boundary conditions and for the
radiation integral as
\begin{equation}
\small
\begin{aligned}
E_{{\rm a},\rho,\nu}^{\rm r}
={}&
\frac{j}{q_{\rm i}^{2}(\theta)}
\Bigg[
k_0\cos\theta
\frac{\partial E_{{\rm a},z,\nu}^{\rm r}}{\partial\rho}
-
\frac{\omega\mu_0}{\rho}
\frac{\partial H_{{\rm a},z,\nu}^{\rm r}}{\partial\phi'}
\Bigg],
\\
E_{{\rm a},\phi,\nu}^{\rm r}
={}&
\frac{j}{q_{\rm i}^{2}(\theta)}
\Bigg[
\frac{k_0\cos\theta}{\rho}
\frac{\partial E_{{\rm a},z,\nu}^{\rm r}}{\partial\phi'}
+
\omega\mu_0
\frac{\partial H_{{\rm a},z,\nu}^{\rm r}}{\partial\rho}
\Bigg],
\\
H_{{\rm a},\phi,\nu}^{\rm r}
={}&
\frac{j}{q_{\rm i}^{2}(\theta)}
\Bigg[
\frac{k_0\cos\theta}{\rho}
\frac{\partial H_{{\rm a},z,\nu}^{\rm r}}{\partial\phi'}
-
\omega\epsilon_r\epsilon_0
\frac{\partial E_{{\rm a},z,\nu}^{\rm r}}{\partial\rho}
\Bigg].
\end{aligned}
\label{eq:PA-reciprocal-components}
\end{equation}
The amplitudes $A_\nu(\theta)$ and $B_\nu(\theta)$ are then obtained
by matching the interior fields to the exterior incident and outgoing
cylindrical fields at $\rho=a$ and enforcing continuity of
$E_z$, $H_z$, $E_\phi$, and $H_\phi$~\cite{Tsalamengas1995}, which
gives \eqref{eq:PA-reciprocal-A} and
\eqref{eq:PA-reciprocal-B}. Consequently, substituting
\eqref{eq:PA-reciprocal-longitudinal-E} and
\eqref{eq:PA-reciprocal-longitudinal-H} into
\eqref{eq:PA-reciprocal-components}, while using the identity $\frac{\partial}{\partial\rho}
J_1\!\left(q_{\rm i}(\theta)\rho\right)
=
q_{\rm i}(\theta)
J_1'\!\left(q_{\rm i}(\theta)\rho\right)$ and $\omega\mu_0/\eta_0=k_0$, provides the radial and azimuthal components of the reciprocal electric field. Combining these components with the corresponding longitudinal component according to
\begin{equation}
\small
\mathbf E_{{\rm a},\nu}^{\rm r}
=
\widehat{\boldsymbol\rho}\,
E_{{\rm a},\rho,\nu}^{\rm r}
+
\widehat{\boldsymbol\phi}\,
E_{{\rm a},\phi,\nu}^{\rm r}
+
\widehat{\mathbf z}\,
E_{{\rm a},z,\nu}^{\rm r},
\qquad
\nu\in\{\theta,\phi\},
\label{eq:PA-reciprocal-total-E1}
\end{equation}
yields the explicit reciprocal electric field in
\eqref{eq:PA-reciprocal-total-E2}.

\begin{figure*}[!t]
\begin{equation}
\small
\mathbf E_{{\rm a},\nu}^{\rm r}
\!=\!
e^{jk_0z\cos\theta}
\begin{cases}
\begin{aligned}
&
\widehat{\mathbf z}\,
A_\theta(\theta)
J_1\!\left(q_{\rm i}(\theta)\rho\right)
\cos(\phi'-\phi)
\\
&+
\frac{
j\widehat{\boldsymbol\rho}\cos(\phi'-\phi)
}{
q_{\rm i}^{2}(\theta)
}
\Bigg[
k_0\cos\theta\,
A_\theta(\theta)q_{\rm i}(\theta)
J_1'\!\left(q_{\rm i}(\theta)\rho\right)
-
\frac{k_0B_\theta(\theta)}{\rho}
J_1\!\left(q_{\rm i}(\theta)\rho\right)
\Bigg]
\\
&\quad+
\frac{
j\widehat{\boldsymbol\phi}\sin(\phi'-\phi)
}{
q_{\rm i}^{2}(\theta)
}
\Bigg[
-\frac{
k_0\cos\theta A_\theta(\theta)
}{\rho}
J_1\!\left(q_{\rm i}(\theta)\rho\right)
+
k_0B_\theta(\theta)q_{\rm i}(\theta)
J_1'\!\left(q_{\rm i}(\theta)\rho\right)
\Bigg],
\end{aligned}
&
\nu=\theta,
\\[6mm]
\begin{aligned}
&
-\widehat{\mathbf z}\,
A_\phi(\theta)
J_1\!\left(q_{\rm i}(\theta)\rho\right)
\sin(\phi'-\phi)
\\
&-
\frac{
j\widehat{\boldsymbol\rho}\sin(\phi'-\phi)
}{
q_{\rm i}^{2}(\theta)
}
\Bigg[
k_0\cos\theta\,
A_\phi(\theta)q_{\rm i}(\theta)
J_1'\!\left(q_{\rm i}(\theta)\rho\right)
+
\frac{k_0B_\phi(\theta)}{\rho}
J_1\!\left(q_{\rm i}(\theta)\rho\right)
\Bigg]
\\
&\quad-
\frac{
j\widehat{\boldsymbol\phi}\cos(\phi'-\phi)
}{
q_{\rm i}^{2}(\theta)
}
\Bigg[
\frac{
k_0\cos\theta A_\phi(\theta)
}{\rho}
J_1\!\left(q_{\rm i}(\theta)\rho\right)
+
k_0B_\phi(\theta)q_{\rm i}(\theta)
J_1'\!\left(q_{\rm i}(\theta)\rho\right)
\Bigg].
\end{aligned}
&
\nu=\phi .
\end{cases}
\label{eq:PA-reciprocal-total-E2}
\end{equation}
\vspace{-2mm}
\hrule
\vspace{-1mm}
\end{figure*}

With $\mathbf E_{{\rm P},k}$ and
$\mathbf E_{{\rm a},\nu}^{\rm r}$ established over the same annular
region, their overlap in \eqref{eq:PA-reciprocal-radiation} can be
evaluated directly. Expressing the PA volume in cylindrical coordinates
gives
\begin{equation}
\small
\begin{split}
F_{\nu,k}(\theta,\phi)
={}&
\frac{\omega^2\mu_0\epsilon_0(1-\epsilon_r)}{4\pi}
\int_{0}^{L_{{\rm P},k}}
\Bigg\{
\int_{b_k}^{a}
\rho
\Bigg[
\int_{0}^{2\pi}
\\[-1mm]
&\mathbf E_{{\rm P},k}(\rho,\phi',z)
\cdot
\mathbf E_{{\rm a},\nu}^{\rm r}(\rho,\phi',z)
\,{\rm d}\phi'
\Bigg]
{\rm d}\rho
\Bigg\}
{\rm d}z .
\end{split}
\label{eq:PA-reciprocal-radiation-expanded}
\end{equation}

The integral in \eqref{eq:PA-reciprocal-radiation-expanded} is
evaluated separately for the two reciprocal polarizations
$\nu\in\{\theta,\phi\}$.
\begin{itemize}
\item[]
\underline{\(\nu=\theta\):}
For the $\theta$-polarized reciprocal field, substituting
\eqref{eq:PA-pinched-section-field-closed} and the
$\nu=\theta$ branch of
\eqref{eq:PA-reciprocal-total-E2} into
\eqref{eq:PA-reciprocal-radiation-expanded} yields
{\scriptsize
\allowdisplaybreaks[4]
\begin{align}
&F_{\theta,k}(\theta,\phi)
=
\frac{\omega^2\mu_0\epsilon_0(1-\epsilon_r)}{4\pi}
\frac{J_1(u(b_k))}
{\sqrt{P_{\rm g}(b_k)}K_1(w(b_k))}
\notag\\[-1mm]
&\times
\int_{0}^{L_{{\rm P},k}}
\int_{b_k}^{a}
\int_{0}^{2\pi}
\rho e^{jk_0z\cos\theta}
\Bigg\{
-\frac{b_k^2}
{w^2(b_k)q_{\rm i}^2(\theta)}
\notag\\[-1mm]
&\times
\Big[
A_k^{+}e^{-j\beta(b_k)z}
+
A_k^{-}e^{j\beta(b_k)z}
\Big]
\Bigg[
\cos\phi'\cos(\phi'-\phi)
\notag\\[-1mm]
&\times
\Bigg(
-\frac{\beta(b_k)C(b_k)w(b_k)}{b_k}
K_0\!\left(\frac{w(b_k)}{b_k}\rho\right)
\notag\\[-1mm]
&+
\frac{\omega\mu_0D(b_k)-\beta(b_k)C(b_k)}{\rho}
K_1\!\left(\frac{w(b_k)}{b_k}\rho\right)
\Bigg)
\notag\\[-1mm]
&\times
\Bigg(
k_0\cos\theta A_\theta(\theta)q_{\rm i}(\theta)
J_1'\!\left(q_{\rm i}(\theta)\rho\right)
-
\frac{k_0B_\theta(\theta)}{\rho}
J_1\!\left(q_{\rm i}(\theta)\rho\right)
\Bigg)
\notag\\[-1mm]
&+
\sin\phi'\sin(\phi'-\phi)
\Bigg(
\frac{\omega\mu_0D(b_k)w(b_k)}{b_k}
K_0\!\left(\frac{w(b_k)}{b_k}\rho\right)
\notag\\[-1mm]
&+
\frac{\omega\mu_0D(b_k)-\beta(b_k)C(b_k)}{\rho}
K_1\!\left(\frac{w(b_k)}{b_k}\rho\right)
\Bigg)
\notag\\[-1mm]
&\times
\Bigg(
-\frac{k_0\cos\theta A_\theta(\theta)}{\rho}
J_1\!\left(q_{\rm i}(\theta)\rho\right)
+
k_0B_\theta(\theta)q_{\rm i}(\theta)
J_1'\!\left(q_{\rm i}(\theta)\rho\right)
\Bigg)
\Bigg]
\notag\\[-1mm]
&+
C(b_k)A_\theta(\theta)
K_1\!\left(\frac{w(b_k)}{b_k}\rho\right)
J_1\!\left(q_{\rm i}(\theta)\rho\right)
\cos\phi'\cos(\phi'-\phi)
\notag\\[-1mm]
&\times
\Big[
A_k^{+}e^{-j\beta(b_k)z}
-
A_k^{-}e^{j\beta(b_k)z}
\Big]
\Bigg\}
\,{\rm d}\phi'\,{\rm d}\rho\,{\rm d}z .
\label{eq:PA-proof-theta-expanded}
\end{align}
}Since the dependence on $\phi'$ in
\eqref{eq:PA-proof-theta-expanded} is confined to
$\cos\phi'\cos(\phi'-\phi)$ and
$\sin\phi'\sin(\phi'-\phi)$, by utilizing
\cite[2.532.6]{Gradshteyn2014} and
\cite[2.532.4]{Gradshteyn2014},
\eqref{eq:PA-proof-theta-expanded} can be rewritten as
{\scriptsize
\allowdisplaybreaks[4]
\begin{align}
&F_{\theta,k}(\theta,\phi)
=
\frac{\omega^2\mu_0\epsilon_0(1-\epsilon_r)}{4}
\frac{J_1(u(b_k))\cos\phi}
{\sqrt{P_{\rm g}(b_k)}K_1(w(b_k))}
\notag\\
&\times
\int_{0}^{L_{{\rm P},k}}
e^{jk_0z\cos\theta}
\Bigg\{
-\frac{b_k^2k_0}
{w^2(b_k)q_{\rm i}^2(\theta)}
\Big[
A_k^{+}e^{-j\beta(b_k)z}
+
A_k^{-}e^{j\beta(b_k)z}
\Big]
\notag\\
&\times
\Bigg[
\frac{w(b_k)}{b_k}q_{\rm i}(\theta)
\Big[
B_\theta(\theta)\omega\mu_0D(b_k)
-
A_\theta(\theta)\beta(b_k)C(b_k)\cos\theta
\Big]
\notag\\
&\times
\int_{b_k}^{a}
\rho
K_0\!\left(\frac{w(b_k)}{b_k}\rho\right)
J_1'\!\left(q_{\rm i}(\theta)\rho\right)
\,{\rm d}\rho +
\frac{w(b_k)}{b_k}
\Big[
B_\theta(\theta)\beta(b_k)C(b_k)
\notag\\
&
-
A_\theta(\theta)\omega\mu_0D(b_k)\cos\theta
\Big]
\int_{b_k}^{a}
K_0\!\left(\frac{w(b_k)}{b_k}\rho\right)
J_1\!\left(q_{\rm i}(\theta)\rho\right)
\,{\rm d}\rho
\notag\\
&+
q_{\rm i}(\theta)
\Big[
\omega\mu_0D(b_k)-\beta(b_k)C(b_k)
\Big]
\Big[
A_\theta(\theta)\cos\theta+B_\theta(\theta)
\Big]
\notag\\
&\times
\int_{b_k}^{a}
K_1\!\left(\frac{w(b_k)}{b_k}\rho\right)
J_1'\!\left(q_{\rm i}(\theta)\rho\right)
\,{\rm d}\rho-
\Big[
\omega\mu_0D(b_k)-\beta(b_k)C(b_k)
\Big]
\notag\\
&\times
\Big[
A_\theta(\theta)\cos\theta+B_\theta(\theta)
\Big]
\int_{b_k}^{a}
\frac{1}{\rho}
K_1\!\left(\frac{w(b_k)}{b_k}\rho\right)
J_1\!\left(q_{\rm i}(\theta)\rho\right)
\,{\rm d}\rho
\Bigg]
\notag\\
&+
C(b_k)A_\theta(\theta)
\Big[
A_k^{+}e^{-j\beta(b_k)z}
-
A_k^{-}e^{j\beta(b_k)z}
\Big]
\notag\\
&\times
\int_{b_k}^{a}
\rho
K_1\!\left(\frac{w(b_k)}{b_k}\rho\right)
J_1\!\left(q_{\rm i}(\theta)\rho\right)
\,{\rm d}\rho
\Bigg\}
\,{\rm d}z .
\label{eq:PA-proof-theta-after-phi}
\end{align}
}Moreover, by using the relation
$K_\nu'(x)=-K_{\nu-1}(x)-\nu K_\nu(x)/x$,
given in \cite[8.486.12]{Gradshteyn2014}, with $\nu=1$, and
subsequently applying the chain rule to the resulting
Bessel-function derivative terms,
\eqref{eq:PA-proof-theta-after-phi} can be rewritten as
{\scriptsize
\allowdisplaybreaks[4]
\begin{align}
&F_{\theta,k}(\theta,\phi)
=
\frac{\omega^2\mu_0\epsilon_0(1-\epsilon_r)}{4}
\frac{J_1(u(b_k))\cos\phi}
{\sqrt{P_{\rm g}(b_k)}K_1(w(b_k))}
\notag\\[-1mm]
&\times
\int_{0}^{L_{{\rm P},k}}
e^{jk_0z\cos\theta}
\Bigg\{
\frac{b_k^2k_0}
{w^2(b_k)q_{\rm i}^2(\theta)}
\Big[
A_k^{+}e^{-j\beta(b_k)z}
+
A_k^{-}e^{j\beta(b_k)z}
\Big]
\notag\\[-1mm]
&\times
\Bigg[
\Big[
B_\theta(\theta)\omega\mu_0D(b_k)
-
A_\theta(\theta)\beta(b_k)C(b_k)\cos\theta
\Big]
\notag\\[-1mm]
&\times
\int_{b_k}^{a}
\rho
\Bigg[
\frac{\partial}{\partial\rho}
K_1\!\left(\frac{w(b_k)}{b_k}\rho\right)
\frac{\partial}{\partial\rho}
J_1\!\left(q_{\rm i}(\theta)\rho\right)
\notag\\[-1mm]
&+
\frac{1}{\rho^2}
K_1\!\left(\frac{w(b_k)}{b_k}\rho\right)
J_1\!\left(q_{\rm i}(\theta)\rho\right)
\Bigg]
\,{\rm d}\rho
\notag\\[-1mm]
&+
\Big[
B_\theta(\theta)\beta(b_k)C(b_k)
-
A_\theta(\theta)\omega\mu_0D(b_k)\cos\theta
\Big]
\notag\\[-1mm]
&\times
\int_{b_k}^{a}
\Bigg[
\frac{\partial}{\partial\rho}
K_1\!\left(\frac{w(b_k)}{b_k}\rho\right)
J_1\!\left(q_{\rm i}(\theta)\rho\right)+
K_1\!\left(\frac{w(b_k)}{b_k}\rho\right)
\notag\\[-1mm]
&\times \frac{\partial}{\partial\rho}
J_1\!\left(q_{\rm i}(\theta)\rho\right)
\Bigg]
\,{\rm d}\rho
\Bigg]+
C(b_k)A_\theta(\theta)
\Big[
A_k^{+}e^{-j\beta(b_k)z}
-
A_k^{-}e^{j\beta(b_k)z}
\Big]
\notag\\[-1mm]
&\times
\int_{b_k}^{a}
\rho
K_1\!\left(\frac{w(b_k)}{b_k}\rho\right)
J_1\!\left(q_{\rm i}(\theta)\rho\right)
\,{\rm d}\rho
\Bigg\}
\,{\rm d}z .
\label{eq:PA-proof-theta-radial}
\end{align}
}Moreover, applying integration by parts to the first radial integral
and the product rule to the second, while utilizing
\cite[5.54.1,8.407.1]{Gradshteyn2014} for the remaining
Bessel-product integral together with the order-one Bessel
differential identity $\frac{\partial}{\partial\rho}
\left[
\rho
\frac{\partial}{\partial\rho}
J_1\!\left(q_{\rm i}(\theta)\rho\right)
\right]
=
\left[
\frac{1}{\rho}
-q_{\rm i}^2(\theta)\rho
\right]
J_1\!\left(q_{\rm i}(\theta)\rho\right)$, then \eqref{eq:PA-proof-theta-radial} reduces to
\begin{equation}
\scriptsize
\begin{aligned}
&F_{\theta,k}(\theta,\phi)
=
\frac{\omega^2\mu_0\epsilon_0(1-\epsilon_r)\cos\phi}{4}
\Bigg[
A_k^{+}R_{\theta,k}^{+}(\theta)
\\[-1mm]
&\times
\int_{0}^{L_{{\rm P},k}}
e^{j[k_0\cos\theta-\beta(b_k)]z}\,{\rm d}z
-
A_k^{-}R_{\theta,k}^{-}(\theta)
\\[-1mm]
&\times
\int_{0}^{L_{{\rm P},k}}
e^{j[k_0\cos\theta+\beta(b_k)]z}\,{\rm d}z
\Bigg].
\end{aligned}
\label{eq:PA-proof-theta-radial-reduced}
\end{equation}
where $I_{0,k}$, $I_{1,k}$, and $I_{2,k}$ are given by
\eqref{eq:PA-I0}, \eqref{eq:PA-I1}, and \eqref{eq:PA-I2},
respectively, while $R_{\theta,k}^{+}(\theta)$ and
$R_{\theta,k}^{-}(\theta)$ are the corresponding radial factors
defined in \eqref{eq:PA-radial-theta}. The minus sign associated with
the backward contribution follows from the adopted forward--backward
HE$_{11}$ convention together with the definition of
$R_{\theta,k}^{-}(\theta)$. Finally, the remaining longitudinal
integrations are
\begin{equation}
\small
\begin{aligned}
\int_{0}^{L_{{\rm P},k}}
e^{j[k_0\cos\theta\mp\beta(b_k)]z}\,{\rm d}z
&=
Z_k^\pm(\theta),
\end{aligned}
\label{eq:PA-proof-Z}
\end{equation}
whose substitution into
\eqref{eq:PA-proof-theta-radial-reduced} yields
\eqref{eq:PA-pattern-theta}.

\item[]
\underline{\(\nu=\phi\):}
Similarly, for the $\phi$-polarized reciprocal field, substituting
\eqref{eq:PA-pinched-section-field-closed} and the
$\nu=\phi$ branch of \eqref{eq:PA-reciprocal-total-E2}
into \eqref{eq:PA-reciprocal-radiation-expanded} gives the
azimuthal terms
$\cos\phi'\sin(\phi'-\phi)$ and
$\sin\phi'\cos(\phi'-\phi)$. By utilizing
\cite[2.532.5]{Gradshteyn2014}, their integration over
$0\leq\phi'<2\pi$ gives $-\pi\sin\phi$ and
$\pi\sin\phi$, respectively, which combine with the
corresponding signs of the reciprocal-field components to yield the
common $\sin\phi$ dependence. Following the same
$K_\nu'(x)=-K_{\nu-1}(x)-\nu K_\nu(x)/x$ relation with $\nu=1$,
the same chain-rule manipulations, and the same integration-by-parts,
product-rule, and Bessel-identity steps as for $\nu=\theta$, the
remaining radial terms reduce to the same $I_{0,k}$, $I_{1,k}$, and
$I_{2,k}$ in \eqref{eq:PA-I0}--\eqref{eq:PA-I2}. Collecting the
corresponding terms through $R_{\phi,k}^{+}(\theta)$ and
$R_{\phi,k}^{-}(\theta)$ in \eqref{eq:PA-radial-phi} therefore
gives
\begin{equation}
\scriptsize
\begin{aligned}
&F_{\phi,k}(\theta,\phi)
=
\frac{\omega^2\mu_0\epsilon_0(1-\epsilon_r)\sin\phi}{4}
\Bigg[
A_k^{+}R_{\phi,k}^{+}(\theta)
\\[-1mm]
&\times
\int_{0}^{L_{{\rm P},k}}
e^{j[k_0\cos\theta-\beta(b_k)]z}\,{\rm d}z
-
A_k^{-}R_{\phi,k}^{-}(\theta)
\\[-1mm]
&\times
\int_{0}^{L_{{\rm P},k}}
e^{j[k_0\cos\theta+\beta(b_k)]z}\,{\rm d}z
\Bigg].
\end{aligned}
\label{eq:PA-proof-phi-radial-reduced}
\end{equation}
The remaining longitudinal integrations are the same
$Z_k^{+}(\theta)$ and $Z_k^{-}(\theta)$ obtained in
\eqref{eq:PA-proof-Z}, whose substitution into
\eqref{eq:PA-proof-phi-radial-reduced} yields
\eqref{eq:PA-pattern-phi}.

After deriving the two polarization components, the far-field radiation pattern is obtained by combining the $\theta$- and $\phi$-polarized components in the spherical basis as
\begin{equation}
\small
\mathbf F_k(\theta,\phi)
=
F_{\theta,k}(\theta,\phi)
\widehat{\boldsymbol{\theta}}
+
F_{\phi,k}(\theta,\phi)
\widehat{\boldsymbol{\phi}}.
\label{eq:PA-pattern-vector-proof}
\end{equation}
Substituting \eqref{eq:PA-pattern-theta} and
\eqref{eq:PA-pattern-phi} into this expression and restoring the common
far-field dependence $e^{-jk_0R}/R$ gives
\eqref{eq:PA-far-field}, which concludes the proof.
\end{itemize}

\bibliographystyle{IEEEtran}
\bibliography{Reference}

\end{document}